\begin{document}


\title{Hybrid and Subexponential Linear Logics\\ Technical Report}

\author{Jo{\"e}lle Despeyroux\footnote{INRIA and CNRS, I3S, Sophia-Antipolis, France. (joelle.despeyroux@inria.fr)}, Carlos Olarte and Elaine Pimentel\footnote{Universidade Federal do Rio Grande do Norte. Brazil.  (carlos.olarte@gmail.com, elaine.pimentel@gmail.com)}}
        
\date{{\em Research Report HAL nb} hal-01358057 --- September 1, 2016 } 

        \maketitle

\begin{abstract}
HyLL (Hybrid Linear Logic) and SELL (Subexponential Linear Logic)  are logical frameworks 
that have been extensively used for specifying systems that  
exhibit modalities such as temporal or spatial ones.
Both frameworks
have 
linear logic (LL) as 
a common ground 
 and they admit  (cut-free)
complete focused proof systems. The difference between the two logics  relies on the way modalities are handled. In HyLL, truth judgments are labelled by {\em worlds} and  
hybrid connectives relate worlds with
formulas.
 In SELL, the linear logic exponentials ($\bang$, $\quest$) are decorated with labels representing {\em locations}, and an ordering on such labels 
 defines the provability relation among resources in those locations.
It is well known that SELL, as a logical framework, 
is strictly more expressive than LL. However, so far, it was not clear whether HyLL is more expressive than LL and/or SELL. 
In this paper, we show an encoding of the  HyLL's logical rules  into LL with the highest level of adequacy, hence showing that
HyLL is as expressive as LL.
 We also propose an encoding of HyLL into \sellU\ (\sell\ plus quantification over locations) that gives better insights about the meaning of worlds in HyLL. 
%
We conclude our expressiveness study  by showing that previous attempts of encoding 
Computational Tree Logic (CTL)  operators into HyLL cannot be extended to consider the whole set of temporal connectives. We show that a system of  LL with fixed points is indeed needed to faithfully encode the behavior of such temporal operators. 

\end{abstract}

\section{Introduction} \label{sec:intro}

Logical
frameworks are 
adequate tools for specifying proof systems, since they support levels of abstraction that facilitate writing declarative specifications of object-logic proof systems. 
Many frameworks have  been 
used  for the specification of  proof systems, and linear logic~\cite{girard87tcs} (LL) is one of the most successful ones. This is mainly because   LL is resource conscious and, at the same time, it can internalize classical and intuitionistic behaviors (see, for example,~\cite{DBLP:journals/tcs/MillerP13,cervesato02ic}).

However, since specifications of object-level systems into the logical framework should be natural and direct, there are some features that cannot be adequately captured in LL, in particular modalities different from the ones present in LL. 

Extensions of LL have been proposed in order to fill this gap. 
The aim is to propose  stronger logical frameworks that preserve the elegant properties of  linear logic as the underlying logic.
Two of such extensions are
HyLL (Hybrid Linear Logic)\footnote{Actually, HyLL is an extension of intuitionistic linear logic (ILL), while \sell\ can be viewed as an extension of both ILL or LL.} \cite{ChaudhuriDespeyroux:14} and SELL (Subexponential Linear Logic) \cite{danos93kgc,OlartePimentelNigam:tcs-15}. These logics  have
been  extensively
used for specifying systems that  
exhibit modalities such as temporal or spatial ones.
The difference between HyLL and SELL relies on the way modalities are handled. 

In HyLL, truth judgments are labeled by worlds and  two hybrid connectives relate worlds with
formulas: the satisfaction $\at$ which states that a proposition is true at a given world, and the localization $\da$ which binds a name for the (current) world the proposition
is true at. These constructors
allow for the specification of modal connectives such as $\Box A$ ($A$ is true in all the accessible worlds)  and $\Diamond A$ (there exists an accessible world where $A$ holds). The underlying structure on worlds allows for the  modeling of  transitions systems and the specification of temporal formulas \cite{ChaudhuriDespeyroux:14,deMaria-Despeyroux-Felty:14-fmmb}.

In SELL, the LL exponentials ($\bang$, $\quest$) are decorated with labels:  the formula 
$\nquest{a}A$ 
can be interpreted as $A$ {\em holds in a
 location}, \emph{modality}, or \emph{world} $a$. Moreover, $A$ can be deduced in a location $b$ related to $a$ ($b \preceq a$). On the other side,  the formula  $\nquest{a}\nbang{a}{A}$ means that  $A$ {\em is confined into the location} $a$, that is, the information $A$ is not propagated to other worlds/locations related to $a$. 
While linear logic has only seven logically distinct prefixes of bangs and question-marks (none, $!$, $?$, $!?$, $?!$, $!?!$, $?!?$), SELL allows for an unbounded number of such prefixes (e.g., $\nbang{a}\nquest{c}\nquest{d}$). Hence SELL 
enhances the expressive power of LL as a logical framework. 

Up to now, it was not clear how HyLL is related to LL and/or SELL. In this paper we answer that question by 
showing a direct encoding of the  HyLL's logical rules  into LL with the highest level of adequacy. Hence, we show that 
HyLL is actually as expressive as LL.

 We also propose an encoding of HyLL into \sellU\ (\sell\ with quantification over locations) that gives better insights about the meaning of worlds in HyLL.  
More precisely, we represent HyLL formulas as formulas in \sell\ and 
encode the logical rules as formulas in \sellU. We show that a flat  subexponential structure is sufficient  for representing  any world structure in HyLL. 
This explains better why the worlds in HyLL do not add any expressive power to LL: 
they cannot control the logical context as the subexponentials do with the promotion rule.

HyLL has been shown to be a flexible framework for the specification of biological systems
\cite{deMaria-Despeyroux-Felty:14-fmmb}
where both the system and its properties 
are specified using the same  logic. More precisely, the properties  of interest are first written in 
Computational Tree Logic (CTL) and later encoded as HyLL formulas. 
However, there was no a formal statement about the CTL fragments that can be adequately captured in HyLL. 
Hence,  the last contribution of this paper is to continue our study of HyLL theory and  to push forward previous attempts of using this logic  for the specification of transition systems and formulas in CTL.  We show that it is not possible to adequately encode, in HyLL,  the universal path  quantifier $\tA$ (for all paths), nor the temporal formula $\tE\tG Q$ (there exists a path where $Q$ always holds). 
The definition of such formulas is recursive, hence 
one needs to use induction, at the meta-level,  to accurately capture their behavior.  Instead of using meta-reasoning, as done in \cite{deMaria-Despeyroux-Felty:14-fmmb}, we use a logical framework featuring fixed point constructs. More precisely, we use  additive multiplicative LL 
with fixed point operators 
($\mu$MALL) \cite{DBLP:journals/tocl/Baelde12}
 for  the encoding of CTL. 
 We show that the well known fixed point characterization of CTL~\cite{DBLP:journals/iandc/BurchCMDH92}  can be  matched by the fixed point operators of 
$\mu$MALL.

The rest of the paper is organized as follows. We briefly recall LL in Section \ref{sec:ll}, HyLL in Section \ref{sec:hyll} and SELL in Section \ref{sec:sell}. 
The encoding of HyLL logical rules into LL is discussed in 
Section \ref{sec:hyll-ll}.  Section \ref{sec:hyllsell} 
presents 
the encoding of HyLL into \sellU. We also prove that information confinement, a feature in SELL that is needed to specify spatial systems, cannot be captured in HyLL.  Section 
\ref{sec:temporal}  shows how to encode CTL   into $\mu$MALL. Section 
\ref{sec:conclusion} concludes the paper.



\section{Preliminaries} 
Although we assume that the reader is familiar with linear logic~\cite{girard87tcs} (LL), we
review some of its basic proof theory in the following sections.

\subsection{Linear Logic and Focusing} \label{sec:ll}
\emph{Literals} are either atomic formulas ($p$) or their
negations ($p^\bot$).  The connectives $\tensor$ and $\lpar$ and their units $1$
and $\bot$ are \emph{multiplicative}; the connectives $\plus$ and
$\with$ and their units $0$ and $\top$ are \emph{additive}; 
$\forall$ and $\exists$ are (first-order) quantifiers;
and $\bang$ and $\quest$ are the exponentials (called bang and question-mark,
respectively).  

First proposed by Andreoli \cite{DBLP:journals/logcom/Andreoli92} for linear logic,
focused proof systems provide  normal form proofs for cut-free proofs.
The connectives of linear logic can be divided into two classes.  The
{\em negative} connectives have invertible introduction rules: these
connectives are $\lpar$, $\bottom$, $\with$, $\top$, $\forall$, and
$\quest$.  The {\em positive} connectives 
$\otimes$, $\one$, $\oplus$,
$\zero$, $\exists$, and $\bang$
are the de Morgan duals of
the negative connectives.
  A formula is {\em positive} if it is
a negated atom or its top-level logical connective is positive.
Similarly, a formula is {\em negative} if it is an atom or its
top-level logical connective is negative. 

Focused proofs are organized into two \emph{phases}.  In the \emph{negative} phase, all the invertible inference rules are eagerly applied. 
The \emph{positive} phase begins by choosing  a positive formula $F$ on which to focus. Positive rules are applied to $F$ until either $\one$ or a negated atom is encountered (and the proof must end by applying the initial rules) or
the promotion rule ($\bang$) is applied  or a negative subformula is encountered, when the proof switches to the negative phase.

This change of phases on proof search is particularly interesting when the focused
formula is a {\em bipole}~\cite{DBLP:journals/logcom/Andreoli92}.
\begin{definition}
We call a {\em monopole} a linear logic formula that is built up from atoms and occurrences of the negative connectives, with the restriction that $\quest$  has atomic scope. 
{\em Bipoles}, on the other hand, are
positive formulas built from monopoles and negated atoms using only positive connectives, with the additional restriction that $\bang$  can only be applied to a monopole.
\end{definition}
Focusing on a bipole 
will produce a
single positive and a single negative phase. This two-phase decomposition 
enables us to  adequately capture  the application of object-level inference rules by the meta-level linear logic, as will be shown in Section~\ref{sec:hyll-sell}.

The focused system LLF for classical linear logic 
can be found in the appendix.

\subsection{Hybrid Linear Logic} \label{sec:hyll}
Hybrid Linear Logic  (HyLL)
is a conservative extension of intuitionistic
first-order linear logic (ILL)~\cite{girard87tcs} where the truth judgments 
are labeled by worlds representing constraints on states and state transitions. 
Judgments of HyLL are of the form 
``$A$ is true at world $w$'', abbreviated as $A ~@~ w$. 
Particular choices of worlds
produce particular instances of HyLL. 
Typical examples  are ``$A$ is true at time $t$'', 
or ``$A$ is true with probability $p$''.
HyLL was first proposed in~\cite{ChaudhuriDespeyroux:14} and it has  been used 
as a logical framework for specifying biological
systems~\cite{deMaria-Despeyroux-Felty:14-fmmb}.

Formally, worlds are defined as follows.
\begin{definition}[HyLL worlds] \label{def:constraint-domain}
  A \emph{constraint domain} $\cal W$ is a monoid structure $\langle W, ., \iota\rangle$. 
  The elements of $W$ are called \emph{worlds}
  and its \emph{reachability relation}  $\preceq\ : W \times W$ is defined as  $u \preceq w$ if there exists 
  $v \in W$ such that $u . v = w$. 
\end{definition}
\noindent
The identity world $\iota$ is $\preceq$-initial and it is intended to represent the
lack of any constraints. Thus, the ordinary first-order linear logic is embeddable into any 
instance of HyLL by setting all world labels to the identity.
A typical, simple example of constraint domain is 
$\mathcal{T} = \langle \N, +, 0\rangle$, representing instants of time. 

Atomic propositions $(p,q,\ldots)$ are applied to a sequence of terms 
$(s,t,\ldots)$, which are drawn from an untyped term language containing 
constants  $(c,d,\ldots)$, term variables $(x, y, \ldots)$ and function symbols 
$(f, g, \ldots)$ applied to a list of terms $(\vec{t})$. 
Non-atomic propositions are constructed from the connectives of first-order
intuitionistic linear logic and the two hybrid connectives. Namely,  
\emph{satisfaction} ($\texttt{at}$), which states that a
proposition is true at a given world ($w, \iota, u.v, \ldots$), and
\emph{localization} ($\downarrow$), which binds a name for the (current) world 
the proposition is true at. 
The following grammar summarizes the syntax of HyLL.

\begin{tabular}{l@{\ }r@{\ }l}
  $t$ & $::=$ & 
       $c ~|~ x ~|~ f(\vec t)$ \\
  $A, B$ & $::=$ & 
       $p(\vec t) ~|~ A \otimes B ~|~ \mathbf{1} ~|~ A \rightarrow B ~|~ 
                A \mathbin{\&} B ~|~ \top ~|~
       A \oplus B ~|~  \mathbf{0} ~|~ ! A ~|~ $  \\
       & &
       $\forall x.~ A ~|~ \exists x.~ A ~|~ $

       $(A ~\at~ w) ~|~ \downarrow u.~ A ~|~ \forall u.~ A ~|~ \exists u.~ A$ \\
\end{tabular}

\noindent
Note that world $u$ is bounded in 
the propositions $\downarrow u.~A$, $\forall u.~ A$ and $\exists u.~ A$.
World variables cannot be used in terms, and neither can term variables occur in 
worlds. This restriction is important for the modular design of HyLL because it 
keeps purely logical truth separate from constraint truth.  
%
We note that $\downarrow$ and $\at$ commute freely with all non-hybrid 
connectives~\cite{ChaudhuriDespeyroux:14}.  


The sequent calculus  \cite{Gentzen35} presentation of HyLL uses sequents of
the form $\Gamma; \Delta \vdashseq C ~@~ w$ where 
$\Gamma$ (\emph{unbounded context}) is a set and 
$\Delta$ (\emph{linear context}) is a multiset
of judgments of the form $A ~@~ w$.
Note that in a judgment $A ~@~ w$ (as in a proposition $A ~\at~ w$), $w$ can be 
any expression in $\cal W$, not only a variable.

The inference rules dealing with the new hybrid connectives are
depicted below 
(the complete set of rules can be found in the appendix).
$$
  \dfrac{\Gamma ; \Delta \vdashseq A @ u} 
        {\Gamma ; \Delta \vdashseq (A ~\at~ u) @ w} \at R
  ~ \qquad
  \dfrac{\Gamma ; \Delta, A @ u \vdashseq C @ w} 
        {\Gamma ; \Delta, (A ~\at~ u) @ v \vdashseq C @ w} \at L
$$
$$
  \dfrac{\Gamma ; \Delta \vdashseq A [w / u] @ w} 
        {\Gamma ; \Delta \vdashseq \downarrow u. A @ w} \downarrow R
  ~  \qquad
  \dfrac{\Gamma ; \Delta, A [v / u] @ v \vdashseq C @ w}
        {\Gamma ; \Delta, \downarrow u. A @ v \vdashseq C @ w} \downarrow L
$$
Note that $(A ~\at~ u)$ is a \emph{mobile} proposition:
it carries with it the world at which it is true. 
%
%
Weakening and contraction are admissible rules for the unbounded context.  

The most important structural properties are the admissibility of 
the general identity (i.e. over any formulas, not only atomic propositions) and cut theorems. 
While the first provides a syntactic completeness theorem for the logic, 
the latter guarantees consistency
(i.e. that there is no proof of  $ . ; . \vdashseq \mathop{0} ~@~ w$).
%

\begin{theorem}[Identity/Cut] 
  \label{thm:cut} 
  $
   \\
   1.~\Gamma ; A ~@~ w \vdashseq A ~@~ w\\
   2.~ {If}~ \Gamma ; \Delta \vdashseq A ~@~ u 
   ~{and}~ \Gamma ; \Delta', A ~@~ u \vdashseq C ~@~ w, 
   ~{then}~ \Gamma ; \Delta, \Delta' \vdashseq C ~@~ w \\
   3.~ {If}~ \Gamma ; . \vdashseq A ~@~ u 
   ~{and}~ \Gamma, A ~@~ u ; \Delta \vdashseq C ~@~ w, 
   ~{then}~ \Gamma ; \Delta \vdashseq C ~@~ w.
$
\end{theorem}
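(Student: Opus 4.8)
The plan is to prove the three statements by simultaneous induction, following the standard pattern for identity and cut elimination in sequent calculi — the novelty here being only the bookkeeping of the world annotations and the two hybrid connectives $\at$ and $\da$.

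For statement~1 (general identity), I would proceed by induction on the structure of the formula $A$. The base case $A = p(\vec t)$ is the atomic initial rule, which is primitive. For the non-hybrid connectives the argument is entirely routine: one decomposes $A$ on both sides of the sequent, e.g.\ for $A = A_1 \tensor A_2$ one applies $\tensor R$ then $\tensor L$ and appeals to the induction hypothesis on $A_1$ and $A_2$ (both at the same world $w$). The two new cases are $A = (B \at u)$ and $A = \da v.\, B$. For $(B \at u) @ w$ on the right, apply $\at R$ to reduce to proving $B @ u$; for $(B \at u) @ w$ on the left, apply $\at L$ to reduce to having $B @ u$ in the linear context; the induction hypothesis on $B$ at world $u$ closes the branch. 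For $\da v.\,B$, apply $\da R$ and $\da L$, both of which substitute the ambient world for $v$, so that one is left proving identity on $B[w/v] @ w$, again handled by the induction hypothesis (the measure decreases because substitution of a world for a world variable does not change the formula's size). The key observation making this go through is precisely the remark quoted above that $\da$ and $\at$ commute with the other connectives, together with the fact that the world annotation is treated uniformly.

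For statements~2 and~3 (the two cut rules), I would use the usual nested induction: the outer induction on the size of the cut formula $A$, and the inner induction on the sum of the heights of the two given derivations. The case analysis splits according to the last rule applied in each premise. The principal cases (both derivations act on the cut formula $A$) reduce the cut to cuts on the immediate subformulas of $A$ — smaller in the outer measure — exactly as in the identity proof; the hybrid cases $A = (B\at u)$ and $A=\da v.\,B$ use $\at R/\at L$ and $\da R/\da L$ and pass the world along. The commutative cases (where one of the last rules does not touch $A$) permute the cut upward, reducing the inner measure; here one must check that every HyLL rule — including the hybrid ones and the rules that move worlds around — commutes with cut, which is where the bulk of the mechanical work lies. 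For statement~3 one additionally needs weakening and contraction admissibility for the unbounded context $\Gamma$ (stated just above as available) to handle the dereliction/copy steps and the case where the right premise contracts the cut formula in $\Gamma$; one also needs the mutual interaction between statements~2 and~3, since a linear cut may create an unbounded cut and vice versa via the $\bang$ rules.

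The main obstacle I anticipate is not conceptual but combinatorial: ensuring that the world labels never obstruct a rule permutation. Concretely, the worry is a rule like $\at L$ or $\da L$ whose conclusion mentions a world $v$ that is unrelated to the world $u$ carried by the mobile formula — one must verify that cut still permutes past it and that no eigenvariable condition on a bound world variable (from $\da R$, $\forall u$, or $\exists u$) is violated when the cut is pushed through, renaming bound world variables as needed. Since world variables cannot appear in terms and term variables cannot appear in worlds, the two quantifier layers do not interfere, so these side conditions are manageable; but they do have to be checked case by case. Once that is done, the induction closes in the standard way.
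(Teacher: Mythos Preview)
The paper does not actually prove this theorem: it only states the result and refers the reader to \cite{ChaudhuriDespeyroux:14} for the proofs and further meta-theory of HyLL. Your proposal follows the standard pattern---structural induction on $A$ for general identity, and a nested lexicographic induction (outer on the cut formula, inner on the sum of derivation heights) for the two admissible cuts, with the mutual dependence between parts~2 and~3 mediated by the $\bang$ rules---and this is indeed the approach taken in the original HyLL paper. Your treatment of the hybrid connectives is correct, and the caveat about renaming bound world variables to preserve eigenvariable conditions when permuting cuts past $\da R$, $\forall u$, $\exists u$ is exactly the one place where care beyond the ordinary ILL argument is required.
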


HyLL is conservative with respect to intuitionistic linear logic: as long as
no hybrid connectives are used, the proofs in HyLL are identical to those in ILL.
%
Moreover,  HyLL 
is more expressive than S5, as it allows direct manipulation of the worlds using
the hybrid connectives, while HyLL's $\delta$ connective 
(see Section~\ref{sec:temporal}) is not definable in S5.
We also note that HyLL admits a complete focused  \cite{DBLP:journals/logcom/Andreoli92} proof system.
The interested reader can find proofs and further meta-theoretical theorems 
about HyLL in~\cite{ChaudhuriDespeyroux:14}.

%
%

\subsection{Subexponentials in Linear Logic}
\label{sec:sell} 
Linear logic with subexponentials\footnote{We  note  that intuitionistic and classical \sell\ are equally expressive, as shown in~\cite{DBLP:conf/csl/Chaudhuri10}. Hence, although we will introduce here the classical version of \sell\ (needed in Section~\ref{sec:hyllsell}), we could also present \sell\ as an extension of ILL.} (\sell) shares with LL  all 
its connectives except the exponentials:
instead of having a single pair of exponentials $\bang$ and $\quest$, \sell\ may
contain as many \emph{subexponentials}~\cite{danos93kgc,nigam09ppdp,OlartePimentelNigam:tcs-15}, written $\nbang{a}$  and $\nquest{a}$,
as one needs.  
The grammar of formulas in  \sell\  is as follows: 
\[\begin{array}{lcl}
 F &::=& \zero \mid \one \mid  \top \mid \bottom \mid p(\vec t) \mid F_1 \tensor F_2 \mid F_1 \oplus F_2 \mid F_1 \lpar F_2 \mid  F_1 \with F_2 \mid \\
& & \exists x. F \mid \forall x. F\mid  
  \nbang{a} F \mid \nquest{a} F 
 \end{array}
\]
The proof system for \sell\ is specified by a
\emph{subexponential signature} $\Sigma = \tup{I, \preceq, U}$, where $I$
is a set of labels, $U \subseteq I$ is a set specifying which
subexponentials allow weakening and contraction, and $\preceq$ is a
pre-order among the elements of $I$. We shall use $a,b,\ldots$
 to range over elements in $I$ and we will assume that $\preceq$
is upwardly closed with respect to $U$, \ie, if $a \in U$ and $a \preceq
b$, then $b \in U$. 

The system $\sell$ is constructed by adding all the rules for
the linear logic connectives  except those for the exponentials. 

The rules for subexponentials are 
dereliction and
promotion of the subexponential labeled with $a \in I$
\[
\infer[\nbang{a}]{\vdash\nquest{a_1} F_1, \ldots \nquest{a_n} F_n, \nbang{a} G}{\vdash\nquest{a_1} F_1, \ldots \nquest{a_n} F_n,  G}
\qquad
\infer[\nquest{a}]{\vdash\Gamma, \nquest{a} G}{\vdash\Gamma, G} 
\]
Here, the rule $\nbang{a}$ has the side condition
that $a\preceq {a_i}$ for all $\tsl{i}$. That is, one can only
introduce a $\nbang{a}$   if
all other formulas in the sequent are marked with indices that are
greater than or equal to $a$. Moreover, for all 
indices $a \in U$, we add the usual  rules of weakening and contraction to $\quest^a$. 

We can enhance  the expressiveness of SELL with the subexponential quantifiers $\forallLoc$ and $\existsLoc$ (\cite{NigamOlartePimentel:concur-13,OlartePimentelNigam:tcs-15})
given by the rules (omitting the subexponential signature)
\[
 \infer[\forallLoc]{\vdash\Gamma, \forallLoc \typeloc{l_x}{a}. G}
 {\vdash \Gamma , G[l_{e}/l_x]}
\qquad
 \infer[\existsLoc]{ \vdash\Gamma, \existsLoc \typeloc{l_x}{a}. G}
 {\vdash\Gamma , G[l/l_x]}
\]
where   $l_e$ is fresh. 
Intuitively, subexponential variables play a similar role as eigenvariables. 
The generic variable $\typeloc{l_x}{a}$ represents any subexponential,
constant or variable in the ideal of $a$. Hence $l_x$ can be substituted 
by any  subexponential $l$ of type $b$ (i.e., $l:b$) if 
$b\preceq a$. 
We call the resulting system $\sellU$.

As shown in \cite{NigamOlartePimentel:concur-13,OlartePimentelNigam:tcs-15}, \sellU\ admits a cut-free, complete focused proof system
(presented in  the appendix).
That will be the system used throughout this text.

 \begin{theorem}
$\sellU$ admits cut-elimination for any subexponential signature $\Sigma$.
\end{theorem}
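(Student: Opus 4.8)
The plan is to run the standard Gentzen-style cut-elimination argument for linear logic, adapted to subexponentials as in the cited works, and extended to cover the subexponential quantifiers $\forallLoc$ and $\existsLoc$. First I would add to $\sellU$ a cut rule; since the subexponentials in $U$ admit contraction, a single promotion $\nbang{a}$ may have to interact with several copies of a $\nquest{a}$-formula, so it is more convenient to work with a \emph{multicut} (mix-cut) rule that cuts $\nbang{a}G$ against zero or more occurrences of $\nquest{a}(G^\bot)$ at once. The statement to prove then becomes: every derivation in $\sellU$ extended with multicut can be transformed into a cut-free $\sellU$ derivation of the same sequent.

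Second, I would fix the termination measure: a lexicographic order whose first component is the \emph{weight} of the cut formula and whose second component is the sum of the heights of the two premise derivations. The only subtlety in the weight is that $\forallLoc \typeloc{l_x}{a}.G$ and $\existsLoc \typeloc{l_x}{a}.G$ must be assigned a strictly larger weight than every instance $G[l/l_x]$; this is legitimate because a location $l$ is not a formula and substituting it for $l_x$ does not change the propositional skeleton of $G$. With this measure I would argue by induction that a topmost multicut can always be replaced by multicuts of strictly smaller measure, or removed entirely.

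Third comes the case analysis. The commutative cases (the multicut permutes above a rule not acting on the cut formula) and the structural cases (weakening kills the cut; contraction on $\nquest{a}$ absorbs one more occurrence into the multicut) are exactly as in the $\sell$ development. The principal cases for the multiplicative, additive, and first-order quantifier connectives are standard. The two genuinely new ingredients are: (a) the principal case $\nbang{a}$ versus $\nquest{a}$, where one reduces to (multi)cuts on $G$ after checking that the promotion side condition $a \preceq a_i$ still holds for all the side formulas — which it does, since these are precisely the formulas the original promotion already licensed; and (b) the principal case $\forallLoc$ versus $\existsLoc$, which requires a \emph{location-substitution lemma}: if $\vdash \Gamma, G$ is derivable and $\typeloc{l_e}{a}$ occurs only as an eigenvariable, then $\vdash \Gamma[l/l_e], G[l/l_e]$ is derivable for every $l:b$ with $b \preceq a$. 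Granting this lemma, the $\forallLoc/\existsLoc$ cut reduces to a cut on $G[l/l_x]$, of strictly smaller weight.

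The main obstacle is the location-substitution lemma, and more precisely making sure that substituting a location for an eigenvariable inside a derivation with nested promotions leaves every promotion side condition intact. This follows from two properties of the signature that hold for \emph{any} $\Sigma$: transitivity of the pre-order $\preceq$ (so $b \preceq a$ and $a \preceq a_i$ give $b \preceq a_i$), and the upward closure of $U$ with respect to $\preceq$ (so substitution never turns a contractible index into a non-contractible one, nor vice versa). Hence the argument is uniform in $\Sigma$, which is exactly the claimed generality. Alternatively, since $\sellU$ with an arbitrary signature is precisely the system treated in \cite{NigamOlartePimentel:concur-13,OlartePimentelNigam:tcs-15}, one may simply appeal to the cut-elimination theorem established there.
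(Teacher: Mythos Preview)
The paper does not actually prove this theorem. It simply states the result and defers entirely to \cite{NigamOlartePimentel:concur-13,OlartePimentelNigam:tcs-15}; the sentence immediately preceding the theorem already says ``As shown in \cite{NigamOlartePimentel:concur-13,OlartePimentelNigam:tcs-15}, \sellU\ admits a cut-free, complete focused proof system.'' So your closing alternative --- appeal directly to those references --- is precisely the paper's own ``proof.''

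Your detailed Gentzen-style sketch therefore goes well beyond what the paper offers. The overall architecture (multicut, lexicographic rank, commutative/principal case split, a substitution lemma for location eigenvariables to handle the $\forallLoc/\existsLoc$ principal cut) is the right one and is indeed how the cited papers proceed. One small caution: your justification of the location-substitution lemma leans on upward closure of $U$ to argue that ``substitution never turns a contractible index into a non-contractible one, nor vice versa.'' Upward closure only gives one of those directions; what actually makes the lemma go through is that a fresh eigenvariable $l_e:a$ carries no assumptions beyond $l_e \preceq a$, so no rule instance above the $\forallLoc$ can have relied on $l_e \in U$ or on anything being $\preceq l_e$ in the first place --- hence replacing $l_e$ by any $l:b$ with $b \preceq a$ cannot invalidate a side condition. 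With that correction your sketch is sound, but since the paper itself offers nothing beyond the citation, your last sentence alone already matches it.
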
 

\section{Relative Expressiveness Power of HyLL and SELL} 
\label{sec:hyll-sell}
We observe that,  while linear logic has only seven logically distinct prefixes of bangs and question-marks, \sell\ allows for an
unbounded number of such prefixes, \eg, $\nbang{i}$, or
$\nbang{i}\nquest{j}$. 
Hence,  by using
different prefixes, 
we  allow for the  specification of  richer systems where subexponentials are used to mark different modalities/states. For instance,
subexponentials can be used to represent contexts of proof
systems~\cite{DBLP:journals/entcs/NigamPR11}; to 
specify systems with temporal, epistemic and spatial modalities \cite{NigamOlartePimentel:concur-13,OlartePimentelNigam:tcs-15} and soft-constraints or preferences \cite{TLP:9303136}; 
 to 
specify Bigraphs \cite{DBLP:conf/lpar/ChaudhuriR15}; and to specify and verify biological  \cite{DBLP:journals/entcs/ChiarugiFHO16} and multimedia interacting systems \cite{DBLP:conf/mcm2/AriasDOR15}. 

One may wonder whether the use of worlds in HyLL  increases also  the expressiveness of LL. In this section we prove that this is not the case by showing that HyLL rules can be directly encoded into LL by using the methods proposed in \cite{DBLP:journals/tcs/MillerP13}. 

\subsection{HyLL and Linear Logic}\label{sec:hyll-ll}
In~\cite{DBLP:journals/tcs/MillerP13} classical linear logic (LL) was used as the logical framework for specifying a number of logical and computational systems. 
The idea is  simple:
use
two meta-level predicates $\lF{\cdot}$ and $\rF{\cdot}$ for  identifying objects that appear on the left or on the right side of the sequents in the object logic, respectively. 
Hence, object-level sequents of the form
${B_1,\ldots,B_n}\vdash{C_1,\ldots,C_m}$ (where $n,m\ge0$) are specified
as  the multiset
$\lft{B_1},\ldots,\lft{B_n},\rght{C_1},\ldots,\rght{C_m}$. If an object-formula $B$ is in a (object-level) classical context, it will be specified in LL as  $\quest\lft{B}$ or $\quest\rght{B}$ (depending on the side of $B$ in the original sequent). Hence HyLL sequents of the form
$\Delta;\Gamma\vdash C$ will be encoded in LL as
$\quest\lft\Delta\lpar\lft{\Gamma}\lpar\rght{C}$
where, if $\Psi = \{F_1,...,F_n\}$, then  $\lft\Psi=\lft{F_1} \lpar ... \lpar \lft{F_n}$ and 
$\quest\lft\Psi= \quest\lft{F_1} \lpar ... \lpar \quest\lft{F_n}$ (similarly for $\rF{\cdot}$). 
%

Inference rules are specified by a rewriting clause that replaces
the active formula in the conclusion by the active formulas in the premises. The  
 linear logic connectives  indicate how these
 object level 
 formulas are connected: contexts are copied ($\with$) or split ($\otimes$), in different inference rules ($\oplus$) or in the same sequent ($\lpar$).
As a matter of example, 
 the additive version of the inference rules 
for conjunction in classical logic
\[
\infer[\wedge_{L1}]{\Delta, A \wedge B \lra \Gamma}{\Delta, A \lra \Gamma}
\quad
\infer[\wedge_{L2}]{\Delta, A \wedge B \lra \Gamma}{\Delta, B \lra \Gamma}
\quad
\infer[\wedge_R]{\Delta \lra \Gamma, A \wedge B}{
 \deduce{\Delta \lra \Gamma,A}{}
 &
\deduce{\Delta \lra \Gamma,B}{}
}
\]
 can be specified as 
 \[
 \begin{array}{lll}
 \wedge_{L}: \exists A,B. (\lF{A\wedge B}^\perp \otimes (\lF{A} \oplus \lF{B} ))
 \qquad
  \wedge_{R} : \exists A,B. (\rF{A\wedge B}^\perp \otimes (\rF{A} \with \rF{B} ))\\
 \end{array}
 \]

%

The following definition shows how to encode HyLL inference rules into LL. 

\begin{definition}[HyLL rules into LL]\label{def:enc}
Let $\wtype$, $\fotype$, $\htype$ and $\ltype$ denote, respectively,  the types for worlds, (first-order) objects,  HyLL judgments  and LL formulas. Let $\rF{\cdot}$  and  $\lF{\cdot}$  be  predicates of the type $\htype \to \ltype$ and   $A$,  $B$,  $C$ have, respectively, types  $\wtype \to \htype$, $\fotype \to \htype$ and $\htype$. 
The encoding of HyLL inference rules into LL is depicted in Figure \ref{fig:hyllLL} (we  omit the encoding of most of the linear logic connectives that can be found in \cite{DBLP:journals/tcs/MillerP13}). 

\begin{figure}
\resizebox{\textwidth}{!}{
$
\begin{array}{lllllll}
 \otimes~R&:& \exists C,C',w. (\rF{ (C \otimes C')@w}^\perp \otimes  \rF{C@w} \otimes   \rF{C'@w})
 &\quad&
 \otimes~L & :& \exists C,C',w. (\lF{ (C \otimes C')@w}^\perp \otimes ( \lF{C@w} ~\invamp~  \lF{C'@w}))\\

\at~R&:&  \exists C, u, w. (\rF{ (C~\at~u)@w}^\perp \otimes \rF{C@u})
& \quad &
\at~L &:&   \exists C, u, w. (\lF{ (C~\at~u)@w}^\perp \otimes \lF{C@u}) \\

\downarrow R &:&   \exists A, u, w. (\rF{ \downarrow u. A@w}^\perp \otimes \rF{(A~w)@w})
& \quad &
\downarrow L &:&   \exists A, u, w. (\lF{ \downarrow u. A@w}^\perp \otimes \lF{(A~w)@w})\\

\forall R  (F)&:&   \exists B, u. (\rF{ \forall x. B@u}^\perp \otimes \forall x. \rF{(B~x)@u})
& \quad &
\forall L  (F)&:&  \exists B, u. (\lF{ \forall x. B@u}^\perp \otimes \exists x. \lF{(B~x)@u})
\\
\forall R(W)&:& \exists A, u. (\rF{ \forall v. A@u}^\perp \otimes \forall v. \rF{(A~v)@u})
& \quad &
\forall L(W)&:&  \exists A, u. (\lF{ \forall v. A@u}^\perp \otimes \exists v. \lF{(A~v)@u})
\\
\bang_L &:& \exists C,w. (\lF{!C@w}^\perp \otimes \quest \lF{C@w})
& \quad &
Init&:& \exists C,w. (\lF{C@w}^\perp \otimes \rF{C@w}^\perp)
\end{array}
$
}
\caption{HyLL rules into LL. (Definition \ref{def:enc}) \label{fig:hyllLL}}
\end{figure}
\end{definition}

Observe that  left and right inference rules for the hybrid connectives  ($\at~$ and $\downarrow$) are the same (see Section \ref{sec:hyll}). This is reflected in the duality of the encoding where we only replace $\rF{\cdot}$ with $\lF{\cdot}$. 
 Observe also that
the inference rules for the quantifiers (first-order  and worlds) look
the same. The difference is on the type of the variables involved. Since $A$ has type $\wtype \to \htype$, the encoding clause $\forall R(W)$
guarantees that the variable $v$ has type $\wtype$. Analogously, since  $B$ has type $\fotype \to \htype$, then  $x$  has type $\fotype$ in the
 clause $\forall R(F)$.
This neat way of controlling the behavior of objects by using types is also inherited by the encoding of the other object level inference rules.

The following theorem shows that, in fact, the encoding of HyLL into LL
is adequate in the sense that a focused step in FLL corresponds {\em exactly} to the application of one inference rule in HyLL.

\begin{theorem}[Adequacy]\label{th:hyll-ll}
Let $\Upsilon$ be the set of clauses 
in Figure \ref{fig:hyllLL}. 
The sequent $\Gamma;\Delta \vdash F@w$ is provable in HyLL iff~ 
$\Up{\Upsilon, \lF{\Gamma}}{\lF{\Delta},\rF{F@w}}{\cdot}$
is provable in FLL. Moreover, the adequacy of the encodings is  on the {\em level of derivations} meaning that, when focusing on a 
specification clause, the bipole derivation  corresponds exactly to applying the introduction rule at the object level.
\end{theorem}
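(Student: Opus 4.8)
The plan is to follow the standard methodology of Miller and Pimentel for proving adequacy of linear-logic encodings at the level of derivations, instantiated to the hybrid connectives. First I would observe that every clause in Figure~\ref{fig:hyllLL} is a \emph{bipole}: each has the shape $\exists \vec{x}.\, (p^\perp \otimes M)$ where $p$ is an atom of the form $\lF{\cdot}$ or $\rF{\cdot}$ and $M$ is a monopole built from atoms using $\invamp$, $\with$, $\forall$, and with $\quest$ applied only to atoms (as in the $\bang_L$ clause). Consequently, focusing on any member of $\Upsilon$ produces exactly one positive phase followed by one negative phase, and the whole bipole derivation consumes a single negative atom from the context (the principal formula, via the $p^\perp \otimes \cdots$ and the $\mathsf{Init}$ rule against it) and releases the atoms of $M$ into the context after the ensuing negative phase. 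This is the ``one focused step = one object-level rule'' correspondence that the theorem asserts.

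Next I would set up the two directions of the iff by induction on proof height. For the forward direction, given a HyLL derivation of $\Gamma;\Delta\vdash F@w$, I proceed by case analysis on its last rule. For each HyLL rule I exhibit the matching clause in $\Upsilon$, show that deciding on that clause and running the forced positive/negative phases reproduces exactly the premises of the HyLL rule as encoded sequents, and invoke the induction hypothesis on each premise. The hybrid cases are the new content: for $\at R$ and $\at L$ the clause simply rewrites $\rF{(C\at u)@w}$ (resp.\ $\lF{\cdot}$) to $\rF{C@u}$, matching the HyLL rule which changes the world annotation from $w$ to $u$ while keeping the linear context fixed; for $\downarrow R$ and $\downarrow L$ the substitution of the world for the bound variable is realized by the fact that $A$ has type $\wtype\to\htype$ and the clause instantiates $A$ at the world $w$, so $\rF{(A~w)@w}$ encodes the HyLL premise $A[w/u]@w$. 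The $\forall$ cases require care that the \emph{type} of the existentially/universally quantified meta-variable is $\fotype$ for $\forall(F)$ and $\wtype$ for $\forall(W)$, which is exactly what guarantees, on the right, that the FLL $\forall$ introduces a genuine eigenvariable of the correct sort (matching the HyLL eigenvariable side condition), and on the left, that the FLL $\exists$ corresponds to the HyLL witness choice. The $\bang_L$ clause moves $\lF{C@w}$ from the linear to the $\quest$-banged (unbounded) zone, matching HyLL's rule that moves $!C@w$ from $\Delta$ into $\Gamma$; here one uses that weakening and contraction on $\quest\lF{\cdot}$ in FLL mirror the corresponding admissible structural rules on the HyLL unbounded context $\Gamma$. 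The $\mathsf{Init}$ clause together with the two FLL initial rules reconstructs the HyLL identity axiom on $C@w$, using Theorem~\ref{thm:cut}(1) to lift it from atoms to arbitrary formulas if needed.

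For the backward direction I argue that, because $\Upsilon$ consists solely of bipoles and the only other formulas in the encoded sequent are of the form $\quest\lF{\cdot}$, $\lF{\cdot}$, $\rF{\cdot}$ (all atoms on the FLL side, hence negative), any FLL proof of $\Up{\Upsilon,\lF{\Gamma}}{\lF{\Delta},\rF{F@w}}{\cdot}$ must, after the initial trivial negative phase, decide on some clause of $\Upsilon$ (deciding on a $\quest\lF{\cdot}$ leads nowhere productive and can be permuted away, or is simply impossible to complete). Then the forced bipole structure means that one focused step strips one HyLL principal formula and produces precisely the FLL encodings of the premises of the corresponding HyLL rule; the induction hypothesis then yields HyLL derivations of those premises, which reassemble into a HyLL derivation of the conclusion. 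The main obstacle I anticipate is the bookkeeping around the $\quest$-banged context: I must show that focusing on a clause never needs to consult $\quest\lF{\Gamma}$ except through the $\mathsf{Init}$ clause (which copies a $\lF{C@w}$ out of the banged zone, matching HyLL's rule for using an unbounded hypothesis), and that the admissibility of weakening/contraction on $\Gamma$ in HyLL is faithfully and only reflected by the FLL structural rules on $\quest$. Making this precise — essentially a permutability argument showing no spurious FLL proofs exist — is where the real work lies; everything else is a routine rule-by-rule check, most of whose non-hybrid cases are already carried out in \cite{DBLP:journals/tcs/MillerP13}.
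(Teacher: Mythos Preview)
Your proposal is correct and follows essentially the same approach as the paper: both rely on the observation that each clause in $\Upsilon$ is a bipole, so that a single focused phase (decide on a clause, run the forced positive then negative phase) corresponds exactly to one HyLL rule application, with the non-hybrid cases deferred to \cite{DBLP:journals/tcs/MillerP13}. The paper's proof merely illustrates this with the single case $\at_L$ and declares the others similar, whereas you spell out the full inductive structure, the role of typing in the quantifier clauses, and the bookkeeping for the unbounded context---but the underlying argument is the same.
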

\begin{proof} We will illustrate here the case for rule $\at_L$, the other cases are similar. 
Applying the object level rule
\[
\infer[\at_L]{\Gamma;\Delta, (A ~\at~ u)@w \vdash C @v}{
  \Gamma;\Delta, A@u \vdash C @v
}
\]
corresponds to  deciding on the LL formula given by the encoding of the rule $\at_L$  (stored in   $\Upsilon$). Due to focusing, the derivation in LL has necessarily the shape\\

\resizebox{.95\textwidth}{!}{
$
\infer[D_2]{
   \Up{\Upsilon, \lF{\Gamma}}{\lF{\Delta},\lF{(A ~\at~u)@w} , \rF{C@v}}{\cdot}
}
{
 \infer[3 \times \exists]{ \Down{\Upsilon, \lF{\Gamma}}{\lF{\Delta},\lF{(A ~\at~u)@w} , \rF{C@v}}{ \exists C, u, w. (\lF{ (C~\at~u)@w}^\perp \otimes \lF{C@u})}}{
  \infer[\otimes]{\Down{ \Upsilon, \lF{\Gamma}}{\lF{\Delta},\lF{(A ~\at~u)@w} , \rF{C@v}}{ \lF{ (A~\at~u)@w}^\perp \otimes \lF{A@u}}}{
   \infer[I_1]{\Down{ \Upsilon, \lF{\Gamma}}{ \lF{(A ~\at~u)@w}}{ \lF{ (A~\at~u)@w}^\perp}}{}
   &
   \infer[R\Downarrow,R\Uparrow]{ \Down{ \Upsilon, \lF{\Gamma}}{\lF{\Delta}, \rF{C@v}}{\lF{A@u}}}{
    \deduce{ \Up{ \Upsilon, \lF{\Gamma}}{\lF{\Delta}, \rF{C@v},\lF{A@u}}{\cdot}}{}
   }
  }
 }
}
$}
\ \\
That is, the LL formula corresponding to $(A ~\at~u)@w$ is consumed and,  in the end of the focused phase, the encoding of  $A@u$ is stored into the linear context. 
This mimics exactly the application of the Rule $\at_L$ in HyLL. 
\end{proof}

\subsection{HyLL and SELL}\label{sec:hyllsell}
Linear logic allows for the specification of two kinds of context maintenance: both weakening and contraction are available (classical context)  or  neither is available (linear context). 
That is,  when we encode (linear) judgments in HyLL belonging to different worlds, the resulting meta-level atomic formulas will be stored in the same (linear) LL context. The same happens with classical  HyLL judgments  and the classical LL context.

Although this is perfectly fine, 
 encoding  HyLL into \sellU\  allows for a better understanding  of worlds in HyLL
 as we shall see. We use subexponentials to represent worlds, where each world has its own  linear context. More precisely, a HyLL judgment of the shape $F@w$ in the (left)
linear context is encoded as the \sellU\ formula $\nquest{w}\lF{F@w}$. 
Hence, HyLL judgments that hold at world $w$ are stored at the $w$ linear context of \sellU. A judgment of the form $G@w$ in the
classical HyLL context is encoded as the \sellU\ formula $\nquest{\copysell}\nquest{w}\lF{G@w}$. That is, the encoding of $G@w$ is stored in the unbounded (classical) subexponential context $\copysell$.

The next definition introduces the encoding of  HyLL inference rules   into \sellU. Observe that, surprisingly, the subexponential structure needed is flat and it does not reflect the order on worlds. This is explained by the fact that worlds in HyLL do not control the context on rules as the promotion rule in SELL does. This also explains why HyLL does not add any expressive power to LL.

%

%

\begin{definition}\label{def:enc:sell}
Let $\wtype,\fotype,\htype,\rF{\cdot},\lF{\cdot},A,B,C$ be as in Definition~\ref{def:enc} and $\ltype$ be the type for \sellU formulas. 
Given a HyLL constraint domain $\cal W$, consider a subexponential signature 
$\Sigma = \tup{I, \preceq, U}$ 
such that 
$I = \cal W \cup \{\infty,\copysell\}$, $w \preceq \infty$ for any $w\in I$ and, for any $u,w \in \cal W\cup\{\copysell\}$, $u \not\preceq w$.  Moreover, $U = \{\copysell,\infty\}$. The encoding of HyLL inference rules into \sellU\ is depicted in Figure \ref{fig:hyllSELL} (we  omit the encodings of the other connectives, that follow similarly).
\begin{figure}
\resizebox{.85\textwidth}{!}{
$
\begin{array}{lllllll}
\otimes~R &:& \exists C, C'. \existsLoc \typeloc{w}{\infty}. (\nbang{w} \rF{ (C \otimes C')@w}^\perp ~\otimes~ \nquest{w} \rF{C@w} ~\otimes~ \nquest{w} \rF{C'@w})\\

\at~R &:  &\exists A. \existsLoc \typeloc{u}{\infty}, \typeloc{w}{\infty}. (\nbang{w} \rF{ (A~\at~u)@w}^\perp ~\otimes~ ?^u \rF{A@u})\\

\at~L &:  &\exists A. \existsLoc\typeloc{u}{\infty}, \typeloc{w}{\infty}. (\nbang{w} \lF{ (A~\at~u)@w}^\perp ~\otimes~ \nquest{u} \lF{A@u})\\
\downarrow R&:  &\exists A. \existsLoc \typeloc{u}{\infty}, \typeloc{w}{\infty}. (\nbang{w} \rF{ \downarrow u. A@w}^\perp ~\otimes~ \nquest{w} \rF{(A~w)@w})
\\
\downarrow L&:  &\exists A. \existsLoc \typeloc{u}{\infty}, \typeloc{w}{\infty}. (\nbang{w} \lF{ \downarrow u. A@w}^\perp ~\otimes~ \nquest{w} \lF{(A~w)@w})\\
\forall R(F)&:& \exists A, \existsLoc \typeloc{w}{\infty}. (\nbang{w}\rF{ \forall x. B@w}^\perp \otimes \forall x.  \nquest{w}\rF{(B~x)@w})
\\
\forall R(W)&:& \exists A, \existsLoc \typeloc{w}{\infty}. (\nbang{w}\rF{ \forall v. A@w}^\perp \otimes \forallLoc \typeloc{v}{\infty}. \nquest{w}\rF{(A~v)@w})
\\
\bang_L &:& \exists C,\existsLoc \typeloc{w}{\infty}. (\nbang{w} \lF{!C@w}^\perp \otimes \nquest {\copysell}\nquest {w} \lF{C@w})
\end{array}
$
}
\caption{HyLL rules into \sellU. (Definition \ref{def:enc:sell}) \label{fig:hyllSELL}}
\end{figure}
\end{definition}


Note that  $\typeloc{w}{\infty}$ represents \emph{any subexponential}  in the ideal of $\infty$. This means that, in   $\existsLoc\typeloc{w}{\infty}.F$, the subexponential variable $w$ could be substituted, in principle, by 
{\em any} element of $I$. But note that, since world symbols are restricted to $\cal W$, substituting $w$ by $\copysell$ or $\infty$
would not match any encoded formula in the context. That is, the proposed subexponential signature correctly specifies the role of worlds in HyLL.

The following theorem shows that our encoding is indeed adequate. 

\begin{theorem}[Adequacy]\label{th:hyll-sell}
Let $\Upsilon$ be the set of formulas resulting from the encoding 
in Definition \ref{def:enc:sell}.
The sequent $\Gamma;\Delta \vdash F@w$ is provable in HyLL iff~ 
$\Up{
\copysell: \{\Upsilon, \lF{\Gamma}\},w_i:\lF{\Delta},  \nquest{w}\rF{F@w}  }{\cdot}$
is provable in \sellU.\footnote{Clarifying some notation:  if $\Delta=\{F_1@w_1,\ldots,F_n@w_n\}$, then $\nquest{w_i}\lF{\Delta}=\nquest{w_1}\lF{F_1@w_1},\ldots,\nquest{w_n}\lF{F_n@w_n}$. Observe that, in the negative phase, such formulas will be stored at their respective contexts, that will be  represented by $w_i:\lF{\Delta}$.} Moreover, the adequacy of the encodings is on the {\em level of derivations}.
\end{theorem}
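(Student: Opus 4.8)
The plan is to mirror the proof of Theorem~\ref{th:hyll-ll} almost verbatim, the only novelty being the bookkeeping of subexponential contexts. First I would fix the direction: the interesting implication is soundness/completeness of one focused bipole step in $\sellU$ against one HyLL inference, and then the theorem follows by induction on the height of the HyLL derivation (resp.\ on the number of bipole phases in the $\sellU$ derivation). For the base case one checks that the encoded $Init$ clause closes exactly when the HyLL sequent is an instance of the identity axiom, using that $\lF{C@w}^\perp$ and $\rF{C@w}^\perp$ are both in the context and that the $U$-part of the signature ($\copysell,\infty$) does not interfere since $w\in\cal W$ is linear. For the inductive step one decides on one clause of $\Upsilon$; because each clause is a bipole (an existential prefix, then a $\nbang{w}$ applied to a negated atom in tensor with monopoles built from $\nquest{\cdot}$ of atoms), focusing forces the derivation into a single positive phase followed by a single negative phase, exactly as in the LL case.

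The key steps, in order: (1) observe the shape of a focused derivation when deciding on, say, the $\at_L$ clause — the existentials instantiate $A$, the world constants $u,w$ (here the $\existsLoc$ witnesses must be drawn from $\cal W$, which the side remark after Definition~\ref{def:enc:sell} already justifies), then the $\otimes$ splits the context, the left branch must match $\nbang{w}\lF{(A~\at~u)@w}^\perp$ against the stored $\nquest{w}\lF{(A~\at~u)@w}$; (2) here is the one genuinely new point: the $\nbang{w}$ promotion rule has the side condition that every other formula in that branch is marked with an index $\succeq w$, and since for any two distinct $u',w'\in\cal W$ we have $u'\not\preceq w'$, the only formula that can remain with the $\nbang{w}$ is precisely the one atom $\nquest{w}\lF{(A~\at~u)@w}$ sitting in the $w$-context — this is what makes the clause ``consume exactly one judgment''; (3) after the $\nquest{w}/\nbang{w}$ cancellation the initial rule closes that branch, and the other branch stores $\nquest{u}\lF{A@u}$ into the $u$-context, which is exactly the encoding of the premise $\Gamma;\Delta,A@u\vdash C@v$; (4) repeat this analysis for the remaining clauses in Figure~\ref{fig:hyllSELL}, noting that $\forall R(W)$ uses the $\forallLoc$ rule whose eigen-freshness corresponds to the eigenworld-freshness in HyLL's $\forall_R(W)$, that $\bang_L$ moves the encoded $C@w$ into the $\copysell$-context, matching promotion of the unbounded HyLL context, and that $\downarrow R/\downarrow L$ and $\at_R$ are handled just like $\at_L$; (5) conclude by induction, checking in each case that the negative phase that follows the bipole re-stores each atom into the context indexed by its world, so that the resulting $\sellU$ sequent is exactly the encoding of the HyLL premises — this is the ``level of derivations'' claim.

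The main obstacle — really the only place where care is needed beyond transcribing the LL proof — is the promotion side condition in step (2), i.e.\ verifying that the flat, mutually-incomparable subexponential order on $\cal W$ forces each $\nbang{w}$ in a clause to see an empty ``relevant'' context except for the single atom it is meant to cancel, and dually that no spurious matching is possible because substituting an $\existsLoc$-bound world variable by $\copysell$ or $\infty$ produces an atom $\lF{C@\copysell}$ that occurs nowhere. Once that is pinned down, the bipole structure does all the remaining work and the proof is a routine case analysis over Figure~\ref{fig:hyllSELL} together with the standard cases inherited from \cite{DBLP:journals/tcs/MillerP13}; I would present the $\at_L$ case in full (with the same explicit focused-derivation figure as in Theorem~\ref{th:hyll-ll}, now decorated with the $w$- and $u$-contexts) and say ``the remaining cases are analogous.''
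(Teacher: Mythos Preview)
Your proposal is correct and follows essentially the same approach as the paper, which likewise presents only the $\at_L$ case in full and relies on the interaction of the promotion rule with the flat order on $\cal W$ to isolate the single matching atom, declaring the remaining cases analogous. One small refinement to your step~(2): the side condition of $\nbang{w}$ exempts indices in $U$, so the $\copysell$-context is \emph{not} ruled out by the side condition itself but is instead weakened in the premise because $w\not\preceq\copysell$ --- the paper makes precisely this observation explicit, and it is what distinguishes the $\sellU$ derivation from the plain LL one in Theorem~\ref{th:hyll-ll}.
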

\begin{proof} Again, we will 
 consider 
the rule $\at_L$, as the other cases are similar. 
If we decide to focus on the \sellU\ formula corresponding to the
encoding of $\at_L$ (stored in $\nquest{\copysell} \Upsilon$), we
obtain  \\

\resizebox{\textwidth}{!}{
$
\hskip -1cm
\infer[D]{
\Up{
\copysell: \{\Upsilon, \lF{\Gamma}\},w_i:\lF{\Delta}, w:\lF{(A ~\at~u)@w} , v:\rF{C@v}  }{\cdot}
}{
 \infer[\exists,\existsLoc]{ \Down{\copysell:\{ \Upsilon, \lF{\Gamma}\},w_i:\lF{\Delta},w:\lF{(A ~\at~u)@w} , v:\rF{C@v}}{\cdot}{\exists C, \existsLoc u, w. (\nbang{w}\lF{ (C~\at~u)@w}^\perp \nquest{u}\otimes \lF{C@u})}}{
  \infer[\otimes]{ \Down{\copysell:\{ \Upsilon, \lF{\Gamma}\},w_i:\lF{\Delta},w:\lF{(A ~\at~u)@w} , v:\rF{C@v}}{\cdot}{ \nbang{w}\lF{ (A~\at~u)@w}^\perp \otimes \nquest{u}\lF{A@u}}}{
   \infer[\nbang{s}]{ \Down{\copysell: \{\Upsilon, \lF{\Gamma}\}, w:\lF{(A ~\at~u)@w}}{\cdot} {\nbang{w}\lF{ (A~\at~u)@w}^\perp}}{
     \infer[D,I]{\Up{w:\lF{(A ~\at~u)@w}}{\cdot}{ \lF{ (A~\at~u)@w}^\perp}}{}
   }
   &
   \infer[R\Uparrow,\nquest{s}{}]{ \Down{\copysell:\{ \Upsilon, \lF{\Gamma}\},w_i:\lF{\Delta}, v:\rF{C@v}}{\cdot}{\nquest{u}\lF{A@u}}}{
    \Up{\copysell:\{ \Upsilon, \lF{\Gamma}\},w_i:\lF{\Delta}, v:\rF{C@v},u:\lF{A@u}}{\cdot}{\cdot}
   }
  }
 }
}
$}
\ \\

Observe that,   in a (focused) derivation proving  $\nbang{w} F$, 
the only contexts that can be present are  $w$ and the $\infty$ contexts due to the promotion rule and the ordering in $\Sigma$. Since the encoding does not store any formula into the context $\infty$,  the formula  $\nbang{w} F$ must necessarily be  proved from  the formulas stored in $w$. Thus, unlike the LL derivation after Theorem  \ref{th:hyll-ll},  the context $\copysell$ is weakened
in the left-hand side derivation since $\copysell \not\preceq w$. Hence $\lF{(A ~\at~u)@w}$ stored initially in the location $w$ is substituted by $\lF{A@u}$ in the location $u$ in one focused step.
\end{proof}

\subsection{Information Confinement} 
One of the features needed to specify spatial modalities  is information \emph{confinement}: a space/world can be inconsistent and this does not imply the inconsistency of the whole system. 
We finish this section by showing that information confinement, a feature that can be specified in SELL,   cannot be modeled in HyLL. 

In   \cite{NigamOlartePimentel:concur-13}  the combination of subexponentials of the form $\nbang{w}\nquest{w}$ was used in order to specify information confinement in \sell. 
  More precisely, since the sequents (in a 2-sided sequent presentation) 
$\nbang{w}\nquest{w}\zero \vdash \zero$ 
and $\nbang{w}\nquest{w}\zero \vdash \nbang{v}\nquest{v}\zero$  are {\em not} provable in SELL, it is possible to specify systems  where inconsistency is local to a given space and does not propagate to the other locations.

In HyLL, however, it is not possible to confine inconsistency: the HyLL rule 
\[
\infer[\mathbf{0} L]{\Gamma;\Delta, \zero@u \vdash F@w}{}
\]
shows that {\em any} formula $F$ in {\em any} world $w$ is derivable from $\mathbf{0}$ appearing in
{\em any} world $u$. 
Observe that, even if we exchange the rule $\mathbf{0} L$ for a weaker version
\[
\infer[0'_L]{\Gamma;\Delta, \zero@w \vdash F@w}{}
\]
the rule $\mathbf{0} L$ would still be admissible
\[
\infer[\cut]{\Gamma;\Delta,\zero@w \vdash F@v}{
  \infer[0'_L]{\Gamma;\Delta,\zero@w \vdash (\zero ~\at~ v)@w}{}
  &
  \infer[\at_L]{\Gamma;\Delta,(\zero ~\at~ v)@w\vdash  F@v}{
   \infer[0'_L]{\Gamma;\Delta,\zero@v \vdash F@v}{}
  }
 }
\]
 
\section{Computation Tree Logic (CTL) in Linear Logic.}\label{sec:temporal}
%

Hybrid linear logic is expressive enough to encode 
some forms of modal operators, thus allowing for the specification of properties of transition systems. As mentioned in \cite{deMaria-Despeyroux-Felty:14-fmmb}, it is possible to encode 
CTL temporal operators into HyLL considering 
 existential ($\tE$)  and bounded universal ($\tA$) path quantifiers. 
We show in this section the limitation of such encodings and how to  fully capture $\tE$ and $\tA$  CTL quantifiers in linear logic with fixed points. For that, we shall use the system $\mu$MALL \cite{DBLP:journals/tocl/Baelde12} that extends MALL
(multiplicative, additive linear logic)  with  fixed point operators.

\paragraph{CTL connectives and path quantifiers}
Let us recall the meaning of the temporal operators in CTL.
$\tX$ (Next)  means ``at the next state''.
$\tF$ (Future) means ``in some future'' while $\tG$ (Globally) means ``in all futures''.
  $\varphi \tU \psi$ ($\varphi$ until $\psi$) means 
``from now, $\varphi$ will be true in every steps until some future point 
(possibly including now)
where $\psi$ holds (and from that point on, $\varphi$ can be true or false)''.

The CTL quantifier $\tE$ (Exists) means ``for some path'' while $\tA$ (All) 
means ``for all paths''. Formulas in CTL are built from propositional variables $a,b,c,...$, the usual propositional logic connectives  and the temporal connectives preceded by a path quantifier: 
\begin{equation}
\begin{array}{lll}
F &::=&  p \mid F \wedge F \mid F \vee F 
\mid 
 {\bf Q} \tX F
\mid {\bf Q} \tF F
\mid {\bf Q} \tG F
\mid {\bf Q} [F \tU F  ]
\qquad \bf Q \in \{\tA,\tE\}
\end{array}
\label{eq:synCTL}
\end{equation}
where $p$ is a state formula. 
\paragraph{Transition Systems} 
Consider a set of propositional CTL variables   $\cn{V}=\{a_1,...,a_n\}$.  A state $\stateS$  is  a 
valuation from $\cn{V}$ into the set $\{\true,\false\}$. 
We shall use $\presS{a_i}$  (resp. $\absS{a_i}$) 
to denote that $\stateS(a_i)=\true$  
(resp. $\stateS(a_i)=\false$). Hence, a state $\stateS$ on the set $\texttt{V}$ can be seen as a conjunction of the form $\texttt{p}_1(a_1) \wedge ... \wedge \texttt{p}_n(a_n)$ where $\texttt{p}_i \in \{\presSS,\absSS\}$. 

We  consider transition systems defined by states as above and  transition rules of the form  $r:\stateS\to \stateS'$.
For instance, if $\texttt{V}=\{a,b\}$, the transition rule $r: \presS{a} \wedge \absS{b} \to \absS{a} \wedge \presS{b}$  enables a transition from a state  $\stateS=\{a\mapsto\true,b\mapsto\false\}$  to the state $\stateS' =\{a\mapsto\false, b\mapsto\true\}$. We shall use   $\stateS\rede{r}\stateS'$ to denote such transitions.  

\subsection{Transition Systems and HyLL} \label{sec:tsHyLL}
In order to specify reachability properties in 
transition systems, some modal connectives are defined  in HyLL 
\cite{ChaudhuriDespeyroux:14}:
\[
\begin{array}{cclcccl}
    \Box A &\eqdef& {\downarrow} u.~ \forall w.~ (A ~\at~ u . w) &\qquad&
    \Diamond A &\eqdef& {\downarrow} u.~ \exists w.~ (A ~\at~ u . w) \\
     \delay{v} A &\eqdef& {\downarrow} u.~ (A ~\at~ u . v) &\qquad&
     A ~\tU~ B &\eqdef& \downarrow u.~ \exists v. \left(B ~\at~ u.v  
     ~~\with~~ \forall w \prec v.~ A ~\at~ u.w \right)
     \end{array}
\]
$\Box A$ (resp. $\Diamond A$) represents all (resp. some) state(s) satisfying $A$ and reachable in some path from now. 
The connective $\delta$ represents a form of delay:  $\delay{v} A$   stands for an \emph{intermediate state} in a
transition to $A$. Informally it can be thought to be ``$v$ before $A$''. 
$A \tU B$ represents that $A$ holds in all the steps until $B$ holds. 


We may use such modal operators in order to encode some features of transition systems as HyLL formulas as follows.
Consider a set $\texttt{V}=\{a_1,...,a_n\}$ of  propositional variables, let $\stateS=\texttt{p}_1(a_1) \wedge \cdots \wedge\texttt{p}_n(a_n)$  represent a state where $\texttt{p}_i\in \{\presSS,\absSS\}$ and $r:\stateS\to \stateS'$ be a rule specifying a state transition. We define the encoding 
$\os \cdot\cs$ from CTL states and state transitions to HyLL as
\[\begin{array}{lcl}
\os \presS{a_i}\cs  = \presS{a_i}
&\qquad\qquad&
\os \absS{a_i}\cs = \absS{a_i}\\
\os \stateS\cs =\bigotimes\limits_{i\in 1..n} \os \texttt{p}_i(a_i)\cs
 & &
\os r:\stateS \to\stateS' \cs=
\forall w.\left( (\os\stateS\cs ~\at~ w  )
\limp \delay{1} (\os \stateS' \cs) ~\at~ w\right)
\end{array}
\]
Moreover,  let $F,G$ be CTL formulas built from 
states and the connectives
$\wedge,\vee, \tU,\tE\tX,\tE\tF$. We can define $\encCTL{F}$ as  
\[
\begin{array}{lll l lll}
\encCTL{\stateS} & = & \os \stateS\cs
&\quad&
\encCTL{ F \wedge G} & = & \encCTL{F} \with \encCTL{G}
\\
\encCTL{ F \vee G} & = & \encCTL{F} \oplus \encCTL{G}
&\quad&
\encCTL{\tE[F \tU G]} & =& \encCTL{F} \tU~ \encCTL{G}
\\
\encCTL{\tE\tX F} & =& \delay{1}\encCTL{F}
&\quad&
\encCTL{\tE\tF F}& =& \Diamond\encCTL{F} \\
\end{array}
\]

It is easy to see that such encodings are {\em faithful}, that is, a 
 (CTL)  formula $F$ holds at state $\stateS$ in the system defined by
 the transition rules $\Rscr$ if and only if the sequent $\os
 \Rscr\cs@0; \os\stateS\cs @ w\vdash \encCTL{ F} @ w$ is provable in HyLL 
(see the appendix).
In fact, since the left linear context is always constituted by atoms, the only action that can be performed is to apply transition
 rules up to reaching the state satisfying $F$, which is reachable in a finite number of steps for this  CTL's limited grammar.


However, 
the above encodings cannot be extended to consider formulas of the shape   $\tE\tG F$. In fact, the natural choice  would be 
$\encCTL{ \tE \tG F}= \Box \encCTL{F}$,
but this encoding would not be adequate. Consider, for instance, a system with only one rule $\Rscr= \{ r: \stateS \to \stateS\}$ that loops on the same state. Clearly, in CTL, $\stateS$ satisfies the formula $\tE\tG \stateS$. Now, consider the HyLL sequent 
$\os \Rscr\cs @ 0; \os\stateS\cs @ w\vdash \Box\encCTL{ \stateS}@ w$. If we decide to introduce the connectives on the right, we obtain a derivation of the shape 
\[
\infer=[\downarrow_R,\forall_R,\at_R]{\os \Rscr\cs@0; \os\stateS\cs @ w\vdash \Box\encCTL{ \stateS} @ w}{\os \Rscr\cs@0; \os\stateS\cs @ w\vdash \os \stateS \cs@w.v}
\]
where $v$ is fresh.
Furthermore, if we use the implication in $\os \Rscr\cs@0$, we obtain a derivation of the shape:
\[
\infer=[copy, \forall_L, \limp_L]{\os \Rscr\cs@0; \os\stateS\cs @ w\vdash G}
{\os \Rscr\cs@0; \os\stateS\cs @ (w+1)\vdash G}
\]

Therefore,  the left and right states in the sequent $\os \Rscr\cs@0; \os\stateS\cs @ (k+n)\vdash \os \stateS \cs_v@w.v$ will never match, and this sequent is not provable. Saying this in other way, the resources in the context  are enough for proving the property for a (bounded) $n$ but not for all natural numbers. For proving this,  one {\em necessarily} needs (meta-level) induction which is the same as using fixed points. The next section shows how to do that with linear logic with fixed point operators. 


%


\subsection{Encoding  $\tE$ and $\tA$ quantifiers in linear logic with fixed points}
In order to prove (in CTL) the formula  $\tA \tF F$ at state $\stateS$, we have to check if $\stateS$ satisfies $F$. If this is not the case, we have to check that 
$\tA \tF F$ holds for all the  successors of $\stateS$    (i.e., for all $\stateS'$ s.t. $\stateS \rede{r} \stateS'$ for some transition rule $r$). Hence, the definition of $\tA\tF$ is recursive and it is usually characterized as a (least) fixed point. 

One way to capture this behavior would by adding fixed point operators  to HyLL. 
But it is simpler to rely on existing systems for linear logic with fixed points. In the following, we show that it is possible to characterize the CTL formulas built from the syntax \eqref{eq:synCTL} into the system $\mu$MALL~\cite{DBLP:journals/tocl/Baelde12} that adds to linear logic (without exponentials) least and greatest fixed points. 

$\mu$MALL shares with linear logic all the proof rules for the additive and multiplicative connectives plus the following two rules\footnote{ $\mu$MALL also consider rules for equality and inequality but we do not need them in our developments.}
\[
\begin{array}{c}
\infer[\nu]{
 \vdash \Delta, \nu B \vec{t}
}{
 \vdash \Delta, S\vec{t}
&\quad
\vec{x} \vdash B~S\vec{x}, (S\vec{x})^\perp
}
\quad 
\infer[\mu]{
 \vdash \Delta, \mu B \vec{t}
}{
 \vdash \Delta, B(\mu B )\vec{t}
}
\end{array}
\]
where  $S$ is the (co)inductive invariant. The $\mu$ rule corresponds to unfolding while $\nu$  allows for (co)induction. 

\paragraph{Path quantifiers as fixpoints}
The usual interpretation of the   CTL quantifiers as fixed points  (see e.g., \cite{DBLP:journals/iandc/BurchCMDH92}) is
\[
\begin{array}{lll  lll   lll}
\tE \tF F  &=& \mu Y. F \vee \tE \tX Y
&\quad
\tE \tG F  &=& \nu Y. F \wedge \tE \tX Y
&\quad
\tE [F~\tU~G]  &=& \mu Y. G \vee (F \wedge \tE\tX Y)
\\
\tA\tF F & = & \mu  Y. F \vee \tA \tX Y
&\quad
\tA\tG F & = & \nu  Y. F \wedge \tA \tX Y
&\quad
\tA [F~\tU~G]  &=& \mu Y. G \vee (F \wedge \tA\tX Y)
\end{array}
\]

In CTL, the considered transition system is assumed to be serial, i.e. every state has at least one successor. 
This means that, in every state, there is at least one fireable rule. 

The next definition shows how to encode CTL formulas into $\mu$MALL. 
 
\begin{definition}[CTL into $\mu$MALL]\label{def:aq} Let $\Rscr$ be a set of transition rules. The encoding of ${\bf Q} \tX$, 
${\bf Q} \tF$ and ${\bf Q} \tG$, for $\bf Q \in \{\tA,\tE\}$ is  in Figure \ref{fig:ctl_fp}. 
Given a state $\stateS=\texttt{p}_1(a_1) \wedge \cdots \wedge\texttt{p}_n(a_n)  $ (as in Section \ref{sec:tsHyLL}), we define
\[
\begin{array}{lll l lll ll}
\os  \presS{a_i}\cs&=&a_i  
& &
\os  \absS{a_i}\cs &=& a_i^\perp
\\
\os s\cs&=& \os\texttt{p}_1(a_1)\cs^\perp  \invamp~ \cdots  \invamp ~\os\texttt{p}_n(a_n)\cs^\perp
& &
\os p\cs &=& \positiveS(p)
\\
\positiveS(\stateS) &=& \os\texttt{p}_1(a_1) \cs\otimes~ \cdots   \otimes \os\texttt{p}_n(a_n)\cs = \os \stateS\cs^\perp
\\
\negativeS(\stateS) &=& (\os\texttt{p}_1(a_1)\cs^\perp\otimes \top) \oplus  \cdots  \oplus (\os\texttt{p}_n(a_n)\cs^\perp \otimes \top)
\end{array}
\]
where $p$ is a state formula.\footnote{
It is useful to allow the state property to mention only a subset of  the propositional variables in $\cn{V}$. In that case, we can define 
 $\os\texttt{p}_i(a_i)\cs $ as above if
$a_i$ occurs in $p$ and $\top$ otherwise.} Finally, we map the CTL connectives  $\wedge$ and $\vee$   into $\with$ and $\oplus$, respectively. 

\begin{figure}
\resizebox{.9\textwidth}{!}{
$
\begin{array}{lll}
\encCTLR{\tA \tX F} &=&    \bigwith\limits_{\stateS\to \stateS'\in \Rscr } \left(  {\negativeS(\stateS)} \oplus ( {\positiveS(\stateS) } \otimes \left( { \os \stateS' \cs } \invamp~ \phi\right)  \right)
\\
\encCTLR{\tE \tX F} &=&    \bigoplus\limits_{\stateS\to \stateS'\in \Rscr } \left(  {\positiveS( \stateS) } \otimes \left( { \os \stateS' \cs } \invamp~ \phi\right)  \right)
\\
\encCTLR{\tA \tF F} &=& \mu Y.  ~ \phi  \oplus    \bigwith\limits_{\stateS\to \stateS' \in \Rscr} \left(  {\negativeS(\stateS) } \oplus ( {\positiveS( \stateS) } \otimes \left( { \os \stateS' \cs } \invamp~ Y\right)  \right)
\\
\encCTLR{\tE \tF F} &=& \mu Y.  ~ \phi  \oplus    \bigoplus\limits_{\stateS\to \stateS' \in \Rscr} \left(    {\positiveS( \stateS) } \otimes \left( \os \stateS' \cs \invamp~ Y\right)  \right)
\\
\encCTLR{\tA \tG F} &=& \nu Y. ~   \phi \with  \bigwith\limits_{\stateS\to \stateS' \in \Rscr} \left(  \negativeS( \stateS ) \oplus ( {\positiveS( \stateS)} \otimes \left( { \os \stateS' \cs } \invamp~ Y \right)  \right) 
\\
\encCTLR{\tE \tG F} &=& \nu Y. ~   \phi \with  \bigoplus\limits_{\stateS\to \stateS' \in \Rscr} \left(   {\positiveS( \stateS)} \otimes \left( \os \stateS' \cs \invamp~ Y \right)  \right) 
\\
\encCTLR{\tA [F ~\tU~ G]} &=& \mu Y.  \psi \oplus \left(  \phi \with \bigwith\limits_{\stateS\to \stateS'\in \Rscr} \left(  \negativeS( \stateS ) \oplus ( \positiveS( \stateS) \otimes \left( { \os \stateS' \cs }  \invamp~  Y  \right))  \right)\right) 
\\
\encCTLR{\tE [F ~\tU~ G]} &=& \mu Y.  \psi \oplus \left(  \phi \with \bigoplus\limits_{\stateS\to \stateS'\in \Rscr} \left(   \positiveS( \stateS) \otimes \left( { \os \stateS' \cs} \invamp~  Y  \right)  \right)\right) 
\end{array}
$
}
\caption{Encoding of CTL temporal operators into $\mu$MALL. Here, $\phi= \encCTLR{F}{}$
and $\psi = \encCTLR{G}{}$.
\label{fig:ctl_fp}}
\end{figure}
\end{definition}

Let us give some intuition.  Consider the rule $r: \stateS \to \stateS'$. 
The formula $\positiveS(\stateS)$ (resp. $\negativeS(\stateS)$) 
tests if $r$ can (resp. cannot) be fired at the current state. 
The encoding of the temporal quantifiers relies on the following principles. For each transition rule, we test if the rule can be fired or not. If it can be fired, then the current state is transformed into the new state. 
The encoding of 
$\tA$ (resp. $\tE$)
test all (resp. one) of the fireable rules. This explains the use of $\bigwith$ (resp. $\bigoplus$).

\begin{example}\label{ex:AF}
Consider the temporal formula  $\tA\tF F$. We first check if $F$ holds in the current state. If this is not the case, 
 for each of the fireable rules, we  consume 
$\os \stateS \cs$  (using $\positiveS(\stateS)$)
and release $\os \stateS'\cs$, thus updating the current state. 
 For instance, 
consider the sequent $\vdash {\os\stateS\cs} ,  \encCTLR{\tA\tF F}$ and assume that 
the formula  $F$
 does not hold at state $\stateS$. If we decide to focus on $\encCTLR{\tA\tF F}$ we obtain a derivation of the shape\\

\resizebox{\textwidth}{!}{
$ 
\infer[\mu,\oplus,\with \quad~]{\vdash \os \stateS\cs , \mu B }{
 \vdash  \os \stateS\cs , \negativeS(\stateS_1) \oplus (\positiveS(\stateS_1) \otimes ( \os \stateS_1'\cs  \invamp~ \mu B)
  \quad ...\quad
  \vdash \os \stateS\cs , \negativeS(\stateS_m) \oplus (\positiveS(\stateS_m) \otimes ( \os \stateS_m'\cs  \invamp~ \mu B)
}
$
}
\ \\ The  premises correspond to proving if the rule $r_i$ is fireable or not. If $r_i:\stateS_i \to\stateS_i' $ is fireable, we observe a derivation of the shape:
\[
\infer[\oplus]{ \vdash \os \stateS\cs, \negativeS(\stateS_i)  \oplus (\positiveS(\stateS_i)   \otimes (\os \stateS_i'\cs\invamp~ \mu B) )}{
 \infer[\otimes,\invamp]{\vdash \os \stateS \cs, \positiveS(\stateS_i)   \otimes (\os \stateS_i'\cs\invamp~ \mu B) }{
  \deduce{\vdash \os \stateS_i'\cs, \mu B }{}
 }
 }
\]
where $\stateS$ becomes $\stateS_i'$ and, from that state, $\mu B$ must be proved. 
\end{example}
\begin{remark} Observe that, in all the clauses in Figure~\ref{fig:ctl_fp}, the formula 
$ { \positiveS(\stateS)} \otimes ( \os \stateS'\cs \invamp~ B$), 
is present. We could have written instead 
$ {\os r \cs} \limp B$, which reads closer to what we expect: ``assuming that   $r$ is fired, $B$ holds''. The formulas $(L\limp R)\limp B$ and $L\otimes (R\limp B)$ are not equivalent. In fact, the first formula is equivalent to $(L\otimes R^ \perp)\invamp B$ while the second is equivalent to $L\otimes (R^ \perp \invamp~ B)$. The first is stronger than the second in the sense that $B$ can choose the branch to move up with ($L$ or $R$), while the second forces $B$ to stick with $R$. Since the desired behavior is the second, offering an extra possibility is not good for proof search.
\end{remark}

\begin{theorem}\label{th:ctl_mumall}
Let $\texttt{V}=\{a_1,...,a_n\}$ be a set of propositional variables, $\Rscr$ be a set of transition rules on $\texttt{V}$, $F$ be a CTL formula and
$\ctlS{\stateS}{\Rscr} F$ denote that $F$ holds at state $\stateS$ in the transition system defined by $\Rscr$.
 Then, $\ctlS{\stateS}{\Rscr}  F$  iff
the sequent 
$\vdash\os \stateS\cs, \encCTLR{F} $ is provable in $\mu$MALL. 
\end{theorem}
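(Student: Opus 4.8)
The plan is to prove both directions by induction on the structure of the CTL formula $F$, with the temporal cases reduced to an inner argument about the fixed-point behavior of $\mu$MALL. The base cases ($F$ a state formula $p$, or $F = F_1 \wedge F_2$, or $F = F_1 \vee F_2$) are immediate: a state formula is encoded as the multiplicative/additive combination $\positiveS(p)$ of literals, and one checks directly that $\vdash \os\stateS\cs, \positiveS(p)$ is provable iff every literal forced by $p$ agrees with $\stateS$, i.e.\ iff $\stateS \models p$; the propositional connectives go through because $\with$ and $\oplus$ mirror $\wedge$ and $\vee$ and the context $\os\stateS\cs$ is duplicated or chosen accordingly. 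For the transition-step building block, I would first isolate a lemma: for a single rule $r:\stateS_0\to\stateS_0'$, focusing on $\positiveS(\stateS_0)\otimes(\os\stateS_0'\cs \invamp B)$ against the context $\os\stateS\cs$ succeeds precisely when $r$ is fireable at $\stateS$ (that is, $\stateS=\stateS_0$ on the relevant variables), leaving the single premise $\vdash \os\stateS_0'\cs, B$; and dually $\negativeS(\stateS_0)$ against $\os\stateS\cs$ succeeds precisely when $r$ is \emph{not} fireable at $\stateS$. This is the content sketched in Example~\ref{ex:AF}, and it pins down that a $\bigwith$ (resp.\ $\bigoplus$) over $\Rscr$ forces proving the continuation for \emph{all} (resp.\ \emph{some}) fireable successors, exactly matching the semantics of $\tA\tX$ versus $\tE\tX$.

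Next I would handle the recursive temporal operators. For the $\mu$ cases ($\tA\tF$, $\tE\tF$, $\tA[\cdot\tU\cdot]$, $\tE[\cdot\tU\cdot]$), the soundness direction ($\models \Rightarrow \vdash$) uses that $\stateS \models \tA\tF F$ holds iff there is a finite bound $k$ within which, along every path, $F$ becomes true; I would induct on this bound, using the $\mu$ unfolding rule to peel off one layer (reducing to the $\tX$ building block above applied to the formula $\phi \oplus \bigwith_{\Rscr}(\cdots)$ with $Y:=\mu B$), and then closing either by the inner induction on $F$ (if $\phi$ is provable now) or by the induction on $k$ at each successor. The completeness direction ($\vdash \Rightarrow \models$) for the $\mu$ cases: a $\mu$MALL proof of $\vdash \os\stateS\cs, \mu B$ is a finite object, so I would induct on the proof, observing that the last rule must be $\mu$ (the only rule introducing $\mu B$), and that after unfolding the only way to progress is through the $\tX$ building block; tracking the states visited reconstructs a finite certificate of $\stateS \models \tA\tF F$ (a finite tree of successors all reaching $F$). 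For the $\nu$ cases ($\tA\tG$, $\tE\tG$), soundness uses the coinduction rule with the invariant $S := \{\stateS' : \stateS' \models \tA\tG F\}$ (encoded as $\bigwith$/$\bigoplus$ of the positive-state encodings of such states, or more precisely as a formula whose provable instances are exactly those contexts): one must check the side condition $\vec{x}\vdash B\,S\vec{x}, (S\vec{x})^\perp$, which amounts to: if $\stateS'\models\tA\tG F$ then $\stateS'\models F$ and every successor of $\stateS'$ is again in $S$ — precisely the CTL fixed-point law. Completeness for $\nu$ follows from the standard fact that any $S$ satisfying the $\nu$ side condition is contained in the greatest fixed point, i.e.\ the existence of a $\mu$MALL proof forces $\stateS$ into the semantic greatest fixed point, hence $\stateS \models \tA\tG F$; here the seriality assumption is essential, because $\tE\tG F$ in a non-serial system could fail when the $\nu$ formula still "succeeds" vacuously, so I would invoke seriality (every state has a fireable rule) to rule that out.

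The main obstacle I anticipate is the completeness (provability $\Rightarrow$ semantics) direction for the $\nu$ operators, specifically extracting a semantic witness from a $\mu$MALL proof that used an \emph{arbitrary} coinductive invariant $S$, not the canonical one. The clean way around this is to appeal to the soundness of the $\nu$ rule relative to the standard set-theoretic (Knaster--Tarski) semantics of $\mu$MALL fixed points: any proof of $\vdash \os\stateS\cs, \nu B$ entails, in that semantics, that $\os\stateS\cs$ lies in the interpretation of $\nu B$, which by construction of the encoding coincides with the CTL greatest fixed point $\{\stateS : \stateS \models \tA\tG F\}$ — provided we have already shown, by the outer induction on $F$, that the interpretation of $\phi$ is exactly $\{\stateS:\stateS\models F\}$. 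A secondary subtlety is bookkeeping around the additive context: since $\os\stateS\cs$ is a $\invamp$ of negated literals (a single negative formula), it behaves additively under $\with$ and is split only by the genuine $\otimes$ inside the $\tX$ building block; I would state this context-shape invariant once and reuse it, so that the induction hypothesis can always be applied to sequents of the form $\vdash \os\stateS'\cs, \encCTLR{G}$ for a uniquely determined reachable state $\stateS'$. With the $\tX$ lemma, the outer structural induction on $F$, and the two inner inductions (on the reachability bound for $\mu$, on the proof / via Knaster--Tarski for $\nu$), the theorem follows.
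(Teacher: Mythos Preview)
Your proposal is correct and structurally mirrors the paper's proof: both proceed by induction on $F$, isolate the fireability lemma for $\positiveS(\cdot)$ and $\negativeS(\cdot)$, handle the $\mu$ cases by unfolding along finite witnessing paths (the paper phrases this as following the path to the $F$-state rather than inducting on a bound $k$, but it is the same argument), and for the $\nu$ soundness direction both supply the canonical invariant given by the finite set $S$ of states satisfying the $\tG$ formula, encoded as $I = \bigoplus_{s\in S}\os s\cs^\perp$.

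The one substantive difference is in the completeness ($\vdash \Rightarrow \models$) direction for $\nu$. The paper stays syntactic: given a proof of $\vdash \os\stateS\cs, \nu B$ with an arbitrary invariant $I_x$, it argues from the premise $\vdash B\,I_x, I_x^\perp$ and the shape of $B$ that $\vdash \os\stateS'\cs, I_x$ must hold for every state $\stateS'$ reachable from $\stateS$; this yields $\vdash I_x, I^\perp$ (i.e.\ $I \limp I_x$), so $I_x$ can be replaced by the canonical $I$ and the CTL semantics is read off directly. You instead invoke a set-theoretic Knaster--Tarski semantics for $\mu$MALL and the soundness of the $\nu$ rule relative to it. Your route is conceptually cleaner but imports semantic machinery external to the proof system; the paper's route is self-contained within $\mu$MALL but needs the slightly delicate propagation step showing the invariant holds at all reachable successors. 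Both are valid.
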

\begin{myproof}[Proof sketch]
In the case of the least fixed point, the result comes easily since unfolding will always substitute $\stateS$ by a reachable  state $\stateS'$ (see derivation in  Example \ref{ex:AF}). For the greatest fixed point, we show that we can always provide the needed invariant. For example,
assume that the states in 
$S=\{s_1,...,s_n\}$ satisfy the formula $\tA\tG F$. We can show that   $I= \os s_1\cs^\perp \oplus \cdots \oplus \os s_n\cs^\perp$ is the greatest invariant  for proving the sequent $\vdash \os \stateS\cs, \encCTLR{\tA\tG F}$. See appendix for the detailed proof. 
\end{myproof}


Our encodings assume that each rule ``uses'' all the variables (either present or absent). 
This greatly simplifies the encodings and the adequacy proofs. 
In \cite{deMaria-Despeyroux-Felty:14-fmmb}, 
this restriction is not imposed, giving rise to  more compact rules.
%
%
%
Note that our restrictions on the use of variables are without loss of generality: if a rule does not use all the variables, we can preprocess the input and generate accordingly a set of rules satisfying our requirements. 

It it worth noticing that, in  Definition \ref{def:aq}, we do not encode the transition rules  as a theory (as we did in Section \ref{sec:tsHyLL}). The reason in the following.  On one hand,  the presence of a  formula $\os\Rscr \cs$ in the context, encoding the rules,   may allow us to move from the current state to a successor one. On the other hand, fixed points operators must be applied in order to go through paths, checking properties on them. These two actions should be coordinated, otherwise one would lose adequacy in the encodings. More precisely, by focusing on $\os \Rscr\cs$, we may ``jump'' a state without checking the needed property in that state. 
Therefore, the use of fixed points excludes the use of theories for encoding the transition system  and  we   must internalize the  transition rules in the definition of the path quantifiers  (see Definition~\ref{def:aq}). This  seems to be the accurate way of
controlling the use of rules in CTL. 

Finally,  note that the encoding of the CTL operators  does not use the exponentials.  That is, only the multiplicative/additive part of the logic  is  enough. 

\section{Concluding Remarks and Future Work} \label{sec:conclusion}
%
We compared the expressiveness, as logical frameworks, of two extensions of linear logic (LL). We show that it is possible to encode the logical rules of HyLL into LL. In order to better analyze the meaning of worlds in HyLL, we show that a flat subexponential structure 
suffices to encode HyLL into \sellU. We also show that information confinement cannot be specified in HyLL. Finally, with better insights about the meaning of HyLL's words, 
we pushed forward previous attempts of using HyLL to encode Computational Tree Logic (CTL). 
We showed that only by using meta-level induction (or fixed points inside the logic) it
is possible to faithfully encode CTL path quantifiers. 

There are some other logical frameworks that are extensions of LL, for example,  HLF\cite{reed06hylo}.
Being a logic in the LF family, HLF is based on natural deduction, hence
having a complex notion of ($\beta \eta$) normal forms
as well as lacking a focused 
system. 
Thus adequacy (of encodings of systems in HLF) results are often much harder to 
prove in HLF than in HyLL or in SELL.
%
  

While logical frameworks should be general enough for specifying and verifying properties of a large number of systems, some logical frameworks may be more suitable for dealing with specific applications than others. Hence, it makes little sense to search for ``the universal logical framework''. However, it is often salutary to establish connections between frameworks, specially when they are 
meant to reason about the same set of systems.

In this context, both HyLL and \sell\ have been used for formalizing and analyzing biological systems 
\cite{deMaria-Despeyroux-Felty:14-fmmb,DBLP:journals/entcs/ChiarugiFHO16}. This work indicates 
that \sell\ is a broader framework for handling such systems, since it can encode HyLL's rules naturally and directly. However, the simplicity of HyLL may be of interest  for specific purposes, such as building tools for 
diagnosis in biomedicine. 
And we can use the encoding of HyLL into LL in order to perform automatic proofs of properties of
  systems encoded in HyLL, for example. 
%
%

Formal proofs in HyLL were implemented in \cite{deMaria-Despeyroux-Felty:14-fmmb}, in the 
Coq  proof assistant.
It would be interesting to extend the implementations of HyLL given there to \sell.
Such an interactive proof environment
would enable both formal studies of encoded systems in \sell\ and formal meta-theoretical study of \sell\ itself.

We may pursue the goal of using HyLL/\sell\ for further applications. That might include neuroscience, a young and promising
science where many hypotheses are provided and need to be verified. 

 
\bibliographystyle{alpha}
\bibliography{references}


 \newpage

\appendix

\section{One Side Focused Proof System for Linear Logic} \label{app:ll} 

\begin{figure}[!h]
\noindent{\em Negative rules}
$$\frac{\Up\Psi\Delta L}{\Up\Psi\Delta{\bot,L}}\ [\bot]
  \qquad
  \frac{\Up{\Psi}{\Delta}{F,G,L}}{\Up{\Psi}{\Delta}{F\lpar G,L}}\ [\lpar]
  \qquad
  \frac{\Up{\Psi,F}{\Delta}{L}}{\Up{\Psi}{\Delta}{\quest F,L}}\ [\quest]
$$
$$\frac{}{\Up{\Psi}{\Delta}{\top,L}}\ [\top]
  \qquad
  \frac{\Up{\Psi}{\Delta}{F,L}\quad \Up{\Psi}{\Delta}{G,L}}
       {\Up{\Psi}{\Delta}{F\with G,L}}
     \ [\with]
  \qquad
  \frac{\Up{\Psi}{\Delta}{F[y/x],L}}{\Up{\Psi}{\Delta}{\forall x.F,L}}\ [\forall]
$$

\smallskip\noindent{\em Positive rules}
$$\frac{}{\Down{\Psi}{\cdot}{\one}}\ [\one]
  \qquad
  \frac{\Down{\Psi}{\Delta_1}{F}\quad \Down{\Psi}{\Delta_2}{G}}
       {\Down{\Psi}{\Delta_1,\Delta_2}{F\otimes G}} \ [\otimes]
  \qquad
  \frac{\Up{\Psi}{\cdot}{F}}{\Down{\Psi}{\cdot}{\bang F}}\ [\bang]
$$
$$\frac{\Down{\Psi}{\Delta}{F_1}}{\Down{\Psi}{\Delta}{F_1\oplus F_2}}
  \ [\oplus_l]
  \qquad
  \frac{\Down{\Psi}{\Delta}{F_2}}{\Down{\Psi}{\Delta}{F_1\oplus F_2}}
  \ [\oplus_r]
  \qquad
  \frac{\Down{\Psi}{\Delta}{F[t/x]}}{\Down{\Psi}{\Delta}{\exists x.F}}
              \ [\exists]
$$

\smallskip\noindent{\em Identity, Decide, and Reaction rules}
$$\frac{}{\Down{\Psi}{A}{\nng{A}}}\ [I_1]
  \qquad
  \frac{}{\Down{\Psi,A}{\cdot}{\nng{A}}}\ [I_2]
  \qquad
  \frac{\Down{\Psi}{\Delta}{F}}{\Up{\Psi}{\Delta,F}{\cdot}}  \ [D_1]
  \qquad
   \frac{\Down{\Psi,F}{\Delta}{F}}{\Up{\Psi,F}{\Delta}{\cdot}} \ [D_2]
$$
In $[I_1]$ and  $[I_2]$, $A$ is atomic; 
in $[D_1]$ and  $[D_2]$, $F$ is not an atom.
$$\begin{array}{c@{\quad}l}
{\displaystyle
\frac{\Up{\Psi}{\Delta,F}{L}}{\Up{\Psi}{\Delta}{F,L}}}\ [R\Uparrow]&
\hbox{\quad provided that $F$ is positive or an atom}\\
\noalign{\medskip}
{\displaystyle
\frac{\Up{\Psi}{\Delta}{F}}{\Down{\Psi}{\Delta}{F}}}\ [R\Downarrow]&
\hbox{\quad provided that $F$ is negative}
\end{array}
$$
\caption{Focused proof linear logic system LLF.}\label{system:LLF}
\end{figure} 
\newpage
\section{HyLL Sequent System} \label{appendix:hyll.seq}


%


\textit{Judgmental rules}

$\begin{array}{c}
  {\Gamma ; p(\vec{t}) ~@~ w \vdashseq p(\vec{t}) ~@~ w} ~[init]
  \qquad
  \dfrac{\Gamma, A ~@~ u; \Delta, A ~@~ u \vdashseq C ~@~ w}
        {\Gamma, A ~@~ u ; \Delta \vdashseq C ~@~ w} ~[copy]
\end{array}$

\textit{Multiplicative rules}
\\

\resizebox{.8\textwidth}{!}{
$\begin{array}{c}
     \dfrac{\Gamma ; \Delta \vdashseq A ~@~ w  \quad  \Gamma ; \Delta' \vdashseq B ~@~ w}
           {\Gamma ; \Delta, \Delta' \vdashseq A \otimes B ~@~ w} [\otimes R]
     \quad
     \dfrac{\Gamma ; \Delta, A ~@~ u, B ~@~ u \vdashseq C ~@~ w}
           {\Gamma ; \Delta, A \otimes B ~@~ u \vdashseq C ~@~ w} [\otimes L] 
     \\\\
     {\Gamma; . \vdashseq \mathbf{1} ~@~ w} ~[\mathbf{1} R]
     \quad 
     \dfrac{\Gamma ; \Delta \vdashseq C ~@~ w}
           {\Gamma ; \Delta, \mathop{1} @~ u \vdashseq C ~@~ w} [\mathop{1} L] 
     \\\\
     \dfrac{\Gamma ; \Delta,~ A ~@~ w \vdashseq B ~@~ w}
           {\Gamma ; \Delta \vdashseq A \limp B ~@~ w} [\limp R]
     \qquad
     \dfrac{\Gamma ; \Delta \vdashseq A ~@~ u \quad \Gamma ; \Delta', B ~@~ u \vdashseq C ~@~ w}
           {\Gamma ; \Delta, \Delta', A \limp B ~@~ u \vdashseq C ~@~ w} [\limp L] 
\end{array}$
}

\textit{Additive rules}

$\begin{array}{c}
     {\Gamma ; \Delta \vdashseq T ~@~ w} ~[T~ R]
     \qquad
     {\Gamma ; \Delta, \mathbf{0} ~@~ u \vdashseq C ~@~ w} ~[\mathbf{0} L]
     \\\\
     \dfrac{\Gamma ; \Delta \vdashseq A ~@~ w \qquad \Gamma ; \Delta \vdashseq B ~@~ w}
           {\Gamma ; \Delta \vdashseq A \mathbin{\&} B ~@~ w}  ~[\mathbin{\&} R]
     \qquad
     \dfrac{\Gamma ; \Delta, A_i ~@~ u \vdashseq C ~@~ w}
           {\Gamma ; \Delta, A_1 \mathbin{\&} A_2 ~@~ u \vdashseq C ~@~ w} 
             [\mathbin{\&} L_i]
     \\\\
     \dfrac{\Gamma ; \Delta \vdashseq A_i ~@~ w}
           {\Gamma ; \Delta \vdashseq A_1 \oplus A_2 ~@~ w} [\oplus R_i]
     \quad
     \dfrac{\Gamma ; \Delta, A ~@~ u \vdashseq C ~@~ w  \quad 
            \Gamma ; \Delta, B ~@~ u \vdashseq C ~@~ w}
           {\Gamma ; \Delta, A \oplus B ~@~ u \vdashseq C ~@~ w} ~[\oplus L]
\end{array}$

\textit{Quantifier rules} 
$$\begin{array}{c}
  \dfrac{\Gamma ; \Delta \vdashseq A ~@~ w}
        {\Gamma ; \Delta \vdashseq \forall \alpha.~ A ~@~ w}  ~[\forall R^\alpha] 
  \qquad
  \dfrac{\Gamma ; \Delta, A [\tau / \alpha] ~@~ u \vdashseq C ~@~ w} 
        {\Gamma ; \Delta, \forall \alpha.~ A ~@~ u \vdashseq C ~@~ w} ~[\forall L] 
  \\\\
  \dfrac{\Gamma ; \Delta \vdashseq A [\tau / \alpha] ~@~ w}
        {\Gamma ; \Delta \vdashseq \exists \alpha.~ A ~@~ w} ~[\exists R]
  \qquad
  \dfrac{\Gamma ; \Delta, A ~@~ u \vdashseq C ~@~ w}
        {\Gamma ; \Delta,\exists \alpha.~ A ~@~ u \vdashseq C ~@~ w} ~ [\exists L^\alpha]
  \\\\ 
  \begin{minipage}{0.9\linewidth}  
     In $\forall R^\alpha$ and $\exists L^\alpha$, $\alpha$ is assumed to be fresh
     with respect to $\Gamma$, $\Delta$, and $C$.\\
     In $\exists R$ and $\forall L$, $\tau$ stands for a term or world, as
     appropriate.
  \end{minipage}
\end{array}$$
\textit{Exponential rules}
$$\begin{array}{c}
  \dfrac{\Gamma ; . \vdashseq A ~@~ w}{\Gamma ; . \vdashseq {! A} ~@~ w}  ~[! R]
  \qquad
  \dfrac{\Gamma, A ~@~ u ; \Delta \vdashseq C ~@~ w}
        {\Gamma ; \Delta, {! A} ~@~ u \vdashseq C ~@~ w} ~[! L]
\end{array}$$
\textit{Hybrid connectives}
$$\begin{array}{c}
  \dfrac{\Gamma ; \Delta \vdashseq A ~@~ u} 
        {\Gamma ; \Delta \vdashseq (A ~\at~ u) ~@~ w} ~[\at~ R]
 \qquad
  \dfrac{\Gamma ; \Delta, A ~@~ u \vdashseq C ~@~ w} 
        {\Gamma ; \Delta, (A ~\at~ u) ~@~ v \vdashseq C ~@~ w} ~[\at~ L]
 \\\\
  \dfrac{\Gamma ; \Delta \vdashseq A [w / u] ~@~ w} 
        {\Gamma ; \Delta \vdashseq \downarrow u. A ~@~ w} ~[\downarrow R]
  \qquad
  \dfrac{\Gamma ; \Delta, A [v / u] ~@~ v \vdashseq C ~@~ w}
        {\Gamma ; \Delta, \downarrow u. A ~@~ v \vdashseq C ~@~ w} ~[\downarrow L]
\end{array}$$
%
%
 %

\newpage
\section {\sellU Sequent System} \label{appendix:sell}
\begin{figure}[h]
$$
\infer[\text{$\with$}]{\vdash \mathcal{K} : \Gamma \Uparrow L, A
\with B}{\vdash \mathcal{K} : \Gamma \Uparrow L, A \;\;\; \vdash
\mathcal{K} : \Gamma \Uparrow L, B}
\quad
\infer[\text{$\lpar$}]{\vdash \mathcal{K} : \Gamma \Uparrow L, A
\lpar B}{\vdash \mathcal{K} : \Gamma \Uparrow L, A, B}
$$
$$
\infer[\text{$\bot$}]{\vdash \mathcal{K} : \Gamma \Uparrow L,
\bot}{\vdash \mathcal{K} : \Gamma \Uparrow L}
\qquad
\infer[\text{$\top$}]{\vdash \mathcal{K} : \Gamma \Uparrow L, \top}{}
\qquad
\infer[\text{$?^l$}]{\vdash \mathcal{K} : \Gamma \Uparrow L, ?^l
A}{\vdash \mathcal{K} +_l A : \Gamma \Uparrow L}
$$
$$
\infer[\text{$\forall$}]{\vdash \mathcal{K} : \Gamma \Uparrow L,
\forall
x.A}{\vdash \mathcal{K} : \Gamma \Uparrow L, A\{c/x\}}\qquad
\infer[\forallLoc_R]{\vdash \mathcal{K} :  \Gamma \Uparrow\forallLoc l_x:a. G}
{
\vdash \mathcal{K} :  \Gamma \Uparrow G[l_e/l_x]
} 
$$
$$
\infer[\text{$\oplus_i$}]{\vdash \mathcal{K} : \Gamma \Downarrow A_1
\oplus A_2}{\vdash \mathcal{K} : \Gamma \Downarrow A_i}
\qquad
\infer[\text{$\otimes$, given $(\mathcal{K}_1 =
\mathcal{K}_2)|_{\Uscr}$}]{\vdash \mathcal{K}_1
\otimes \mathcal{K}_2 : \Gamma, \Delta \Downarrow A \otimes B}{\vdash
\mathcal{K}_1 : \Gamma \Downarrow A \;\;\; \vdash \mathcal{K}_2 : \Delta
\Downarrow B}
$$
$$
\infer[\text{1, given $\mathcal{K}[\mathcal{I} \setminus \Uscr] =
\emptyset$}]{\vdash \mathcal{K} : \cdot \Downarrow 1}{}
\quad
\infer[\text{$\exists$}]{\vdash \mathcal{K} : \Gamma \Downarrow \exists
x.A}{\vdash \mathcal{K} : \Gamma \Downarrow A\{t/x\}}
\quad 
\infer[\existsLoc_L]{\vdash \mathcal{K} :  \Gamma\Downarrow  \existsLoc l_x:a.G }
{\vdash \mathcal{K} :  \Gamma
\Downarrow   G[l/l_x] 
}
$$
$$
\infer[\text{$!^l$, given $\mathcal{K}[\{x\ |\ l \npreceq x \wedge x
\notin \Uscr\}] = \emptyset$}]{\vdash \mathcal{K} : \cdot \Downarrow
!^l A}{\vdash \mathcal{K} \leq_l : \cdot \Uparrow A}
$$
$$
\infer[\text{I, given $A_t \in (\Gamma \cup
\mathcal{K}[\mathcal{I})$ and $(\Gamma \cup \mathcal{K}[\mathcal{I}
\setminus \Uscr]) \subseteq \{A_t\} $}]{\vdash \mathcal{K}
: \Gamma \Downarrow A_t^\bot}{}
$$
$$
\infer[\text{$D_l$, given $l \in \Uscr$}]{\vdash
\mathcal{K} +_l P : \Gamma \Uparrow \cdot}{\vdash \mathcal{K} +_l P :
\Gamma \Downarrow P}
\qquad
\infer[\text{$D_l$, given $l \notin \Uscr$}]{\vdash \mathcal{K} +_l P :
\Gamma \Uparrow \cdot}{\vdash
\mathcal{K} : \Gamma \Downarrow P}
$$
$$
\infer[\text{$D_1$}]{\vdash \mathcal{K} : \Gamma, P \Uparrow
\cdot}{\vdash \mathcal{K} : \Gamma \Downarrow P}
\qquad
\infer[\text{$R\Downarrow$}]{\vdash \mathcal{K} : \Gamma \Downarrow
N}{\vdash \mathcal{K} : \Gamma \Uparrow N}
\qquad
\infer[\text{$R\Uparrow$}]{\vdash \mathcal{K} : \Gamma \Uparrow L,
S}{\vdash \mathcal{K} : \Gamma, S \Uparrow L}
$$
\vspace{-2mm}
\caption{Focused linear logic system with (quantified) subexponentials. 
Here, $L$ is a list of formulas,
$\Gamma$ is a multi-set of formulas and positive literals, $A_t$ is an
atomic formula, $P$ is a non-negative literal,
$S$ is a positive
literal or formula and $N$ is a negative formula.}
\label{figure:sellf}
\vspace{-3mm}
\[
\begin{array}{l@{\qquad}l}
\bullet~(\mathcal{K}_1 \tensor \mathcal{K}_2) [\tsl{i}] = \left\{
\begin{array}{ll}
 \mathcal{K}_1[\tsl{i}] \cup \mathcal{K}_2[\tsl{i}] & \hbox{ if }
\tsl{i} \notin \mathcal{U}\\
 \mathcal{K}_1[\tsl{i}]  & \hbox{ if } \tsl{i} \in \mathcal{U}
\end{array}
\right.
& 
\bullet~\mathcal{K}[\mathcal{S}] =
\bigcup\{\mathcal{K}[\tsl{i}]\;|\;\tsl{i}\in \mathcal{S}\}\\[15pt]
\bullet~(\mathcal{K} +_l A) [\tsl{i}] = \left\{
\begin{array}{ll}
 \mathcal{K}[\tsl{i}] \cup \{A\} & \hbox{ if } \tsl{i} = l\\
 \mathcal{K}[\tsl{i}]  & \hbox{ otherwise }
\end{array}
\right.
&
\bullet~ \mathcal{K} \leq_i[\tsl{l}] = \left\{
\begin{array}{ll}
 \mathcal{K}[\tsl{l}] & \hbox{ if } i \preceq \tsl{l}\\
 \emptyset & \hbox{ if } i \npreceq \tsl{l} 
\end{array}
\right.
\end{array}
\]
\[
\bullet~ (\mathcal{K}_1 \star \mathcal{K}_2)\mid_\mathcal{S}
\textrm{ is true if and only if }(\mathcal{K}_1[\tsl{j}]
\star \mathcal{K}_2[\tsl{j}])
\]
\vspace{-3mm}
\caption{Specification of operations on contexts. Here, 
$\tsl{i} \in I$, $\tsl{j} \in \Sscr$, $\Sscr \subseteq I$, and the 
binary connective $\star \in \{=, \subset, \subseteq\}$.}
\label{Fig:Contexts}
\vspace{-4mm}
\end{figure}
 
\newpage
\section{Adequacy Proofs} \label{appendix:proofs}

\subsection{CTL in HyLL}
\label{app:ctlhyll}
\begin{proposition}\label{prop:state-trans}
Let $\texttt{V}$ be a set of variables and  $\Rscr=\{r_1,...,r_m\}$ be a set of transition rules on $\texttt{V}$. Then,  $\stateS \rede{r_i} \stateS'$ iff the 
sequent 
\[ \os r_1\cs @0, \cdots , \os r_n\cs @0; 
\os \stateS \cs @w \vdash \delay{1}\os \stateS' \cs @w
\] is provable in HyLL. 
\end{proposition}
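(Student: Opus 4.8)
The plan is to prove both directions of the equivalence by a direct analysis of the HyLL proof search, exploiting the fact that the linear context is constituted only by atomic propositions encoding the current state.

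\medskip

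\textbf{From transitions to provability.} Suppose $\stateS \rede{r_i} \stateS'$ for some rule $r_i : \stateS \to \stateS'$. Recall that $\os r_i \cs @ 0 = \forall w.\,((\os\stateS\cs \at w) \limp \delay{1}(\os\stateS'\cs) \at w)$ lives in the unbounded context. The derivation is built top-down: first apply $[copy]$ to bring the (universally closed) encoding of $r_i$ into the linear context, then instantiate the world quantifier with $w$ via $[\forall L]$, obtaining $(\os\stateS\cs \at w) \limp \delay{1}(\os\stateS'\cs) \at w$ at world $0$. Applying $[\limp L]$ splits the proof obligation into (a) $\Gamma; \os\stateS\cs @ w \vdash (\os\stateS\cs \at w) @ 0$, which by $[\at R]$ reduces to $\Gamma; \os\stateS\cs @ w \vdash \os\stateS\cs @ w$, provable by repeated $[\otimes R]$/$[\otimes L]$ down to $[init]$ on each atom $\texttt{p}_j(a_j)$; and (b) $\Gamma; (\delay{1}(\os\stateS'\cs)) @ 0 \vdash \delay{1}(\os\stateS'\cs) @ w$. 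For (b), since $\delta$ and $\at$ commute with the non-hybrid connectives and $\delay{1} A \eqdef {\downarrow}u.\,(A \at u.1)$, unfolding the definition on both sides and using $[\downarrow L]$, $[\downarrow R]$, $[\at L]$, $[\at R]$ makes the worlds match ($0.1$ against the bound-variable instantiation; $w.1$ against $w.1$) and the goal reduces to the general identity $\Gamma; \os\stateS'\cs @ (w.1) \vdash \os\stateS'\cs @ (w.1)$, available by Theorem~\ref{thm:cut}(1) (or, concretely, by decomposing the $\otimes$ structure of $\os\stateS'\cs$).

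\medskip

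\textbf{From provability to transitions.} For the converse, I would argue that any HyLL proof of the displayed sequent must, up to permutation of rules, have the shape just described. The key observation is that the left linear context starts as a multiset of atomic judgments $\texttt{p}_j(a_j) @ w$, so no left rule other than $[copy]$ is applicable to make progress, and the only formulas in $\Gamma$ are the (world-universally quantified) rule encodings $\os r_k\cs @ 0$. On the right, the goal $\delay{1}\os\stateS'\cs @ w$ is (after unfolding $\delta$) a $\downarrow$/$\at$-prefixed tensor of atoms at world $w.1$, which cannot be closed by $[init]$ against anything at world $w$. Hence the proof is forced to $[copy]$ some $\os r_k\cs$, instantiate its world with $w$ (any other instantiation leaves the left premise of $[\limp L]$ unprovable, since the $\at$ in the hypothesis $\os\stateS\cs \at u$ would demand the state atoms at a world $u \neq w$), and then the left premise of $[\limp L]$ forces $\os\stateS_k\cs$ — the pattern of the chosen rule — to coincide atom-by-atom with $\os\stateS\cs$, i.e. $\stateS_k = \stateS$. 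By the right-hand residual, the conclusion of the rule must be $\os\stateS_k'\cs$ with $\delay{1}\os\stateS_k'\cs @ w = \delay{1}\os\stateS'\cs @ w$, forcing $\stateS_k' = \stateS'$. Thus $r_k : \stateS \to \stateS'$ is exactly the transition witnessing $\stateS \rede{r_k} \stateS'$.

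\medskip

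\textbf{Main obstacle.} The forward direction is routine bottom-up proof construction; the real work is the \emph{completeness} (right-to-left) argument, where I must rule out all alternative proof attempts. The delicate points are: (i) showing that a spurious instantiation of the $\forall$ over worlds in $\os r_k\cs$, or an attempt to use several rules, cannot succeed — this rests on the fact that the only way to discharge atomic judgments is $[init]$ at matching worlds, and world arithmetic in $\mathcal{T} = \langle \N, +, 0\rangle$ (or the generic monoid) is cancellative enough that $w \neq u \Rightarrow w.1 \neq u.1$ is not even needed, only that $\os\stateS\cs @ w$ cannot prove $\os\stateS\cs @ u$ for $u \neq w$; and (ii) handling the $\delta$/$\downarrow$/$\at$ bookkeeping cleanly — here I would invoke the cited fact that $\downarrow$ and $\at$ commute freely with the non-hybrid connectives, so the tensor-of-atoms structure can be exposed first and the hybrid prefixes dealt with uniformly. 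A focused presentation of HyLL (which the paper notes exists) would make this bureaucracy essentially disappear, since focusing on $\os r_k\cs$ yields exactly one bipole whose shape is the derivation above; I would phrase the completeness argument in those terms to keep it short.
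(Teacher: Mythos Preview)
Your approach is essentially the paper's: build the derivation explicitly for the forward direction, and argue that proof search is forced for the converse. The paper does both directions at once via the focused HyLL system (with atoms given positive bias), which is exactly the streamlining you propose in your final paragraph; so there is no methodological difference, only a difference in where focusing is invoked.

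One concrete slip in your forward direction: premise (b) is misstated. The consequent of the copied rule is $(\delay{1}\os\stateS'\cs)\ \at\ w$, not bare $\delay{1}\os\stateS'\cs$, so after $[\limp L]$ the right premise is
\[
\Gamma;\ \bigl((\delay{1}\os\stateS'\cs)\ \at\ w\bigr)\,@\,0 \ \vdash\ \delay{1}\os\stateS'\cs\,@\,w,
\]
and a single $[\at L]$ moves the hypothesis to world $w$ before you unfold $\delta$. As you wrote it, with the hypothesis at world $0$, unfolding $\delta$ on the left lands at $0.1$ (hence your puzzling ``$0.1$ against the bound-variable instantiation'' remark), and the sequent would \emph{not} reduce to an identity. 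With the missing $\at\ w$ restored, both sides unfold to $\os\stateS'\cs\,@\,w.1$ and your conclusion is correct. The converse argument is fine; the paper's additional observation---that applying a second rule would push the state to world $w.n$ with $n>1$, which can never match $w.1$ on the right---is precisely your point (i), made crisp by focusing.
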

\begin{proof}  We will use the focused version of the HyLL system \cite{ChaudhuriDespeyroux:14}  and assume that atoms have positive bias. Observe that $\os \stateS \cs  @w$ and $ \delay{1}\os \stateS' \cs @w$ will be decomposed in the negative phase,  giving rise to a sequent of the shape
$$ \os r_1\cs @0, \cdots ,\os r_n\cs @0;
 \mathtt{s}_1@w,\ldots,  \mathtt{s}_n@w\vdash \mathtt{s}'_1@w.1\otimes\ldots  \otimes\mathtt{s}'_n@w.1$$ where $\mathtt{s}_i,
 \mathtt{s}_i' \in\{\presS{a_i}, 
 \absS{a_i}\}$. Since each $\mathtt{s}_i$ is a literal in the world $w$ and the literals in $\mathtt{s}'_i$ belong to the world $w.1$, it is not possible to focus on $\mathtt{s}'_1@0\otimes\ldots  \otimes\mathtt{s}'_n@0$. Hence,  the only possibility of proceeding with the proof is by focusing on one of the formulas $\os r_i\cs @0$. Due to the shape of $\os r_i\cs @ 0$, in one focused step, we consume {\em all} atoms from $\os \mathtt{s}\cs @ w$  and add to the context the formula  $\os \mathtt{s'}\cs @ w.1$. This  mimics exactly the 
 transition $\stateS \rede{r_i} \stateS'$ . Now we can focus on the right of the sequent and the proof finishes. Note that focusing again in a formula $\os r_i\cs @ 0$ is pointless since the atoms will be produced 
 at the world $w+n$ ($n>1$) and hence, never matching the formulas on the right. 
\end{proof}

\paragraph{CTL Formulas in HyLL}
The next proposition considers only the encoding of the fragment of CTL into HyLL
presented in Section \ref{sec:tsHyLL}. 
\begin{proposition}
The CTL formula $F$ holds at state $\stateS$ iff 
$\os \Rscr\cs@0; \os\stateS\cs @ w\vdash \encCTL{ F} @ w$ is provable in HyLL. 
\end{proposition}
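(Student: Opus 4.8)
The plan is to prove both directions by induction on the structure of the CTL formula $F$, using Proposition~\ref{prop:state-trans} as the base-case engine for single transitions and reusing the reasoning in the proof of Theorem~\ref{th:hyll-ll}'s style of focused analysis. Throughout I would work with the focused presentation of HyLL and assume atoms have positive bias, so that the encoded state $\os\stateS\cs @ w = \bigotimes_i \os\texttt{p}_i(a_i)\cs @ w$ is fully decomposed in the negative phase into a multiset of literals $\mathtt{s}_1@w,\ldots,\mathtt{s}_n@w$ in the linear context, exactly as in Proposition~\ref{prop:state-trans}. The key structural observation, which I would isolate as a lemma, is that in any provable sequent $\os\Rscr\cs@0;\ \os\stateS\cs@w \vdash \encCTL{G}@w$ the linear context consists only of atoms, hence the only positive formula one can ever decide on is one of the rule clauses $\os r_i\cs@0$ or (after enough negative-phase work) the right-hand side; and focusing on $\os r_i\cs@0$ performs exactly one transition $\stateS\rede{r_i}\stateS'$ while shifting the world from $w$ to $w.1$ (by Proposition~\ref{prop:state-trans}).

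\textbf{Forward direction ($\models \Rightarrow \vdash$).} By induction on $F$. For a state formula $\stateS'$: if $F=\stateS'$ holds at $\stateS$ then $\stateS=\stateS'$ (as valuations on $\texttt{V}$), and $\os\Rscr\cs@0;\os\stateS\cs@w\vdash\os\stateS'\cs@w$ closes by initial rules after the negative phase pairs up the literals. For $F=G_1\wedge G_2$: faithfulness of $\encCTL{\cdot}$ sends $\wedge$ to $\with$, whose right rule copies the context, and we apply the induction hypothesis to each premise; $F=G_1\vee G_2$ is dual, using $\oplus R_i$ and the fact that $\stateS\models G_1\vee G_2$ picks one disjunct. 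For $F=\tE\tX G$: if $\stateS\models\tE\tX G$ there is a rule $r_i$ with $\stateS\rede{r_i}\stateS'$ and $\stateS'\models G$; we introduce $\delta^1$ on the right ($\downarrow_R$ then $\at_R$, moving to world $w.1$), then decide on $\os r_i\cs@0$ to reach $\os\Rscr\cs@0;\os\stateS'\cs@(w.1)\vdash\encCTL{G}@(w.1)$ and apply the induction hypothesis (relativized to base world $w.1$ instead of $w$ — the statement is insensitive to the choice of base world since rules are stored at $0$ and quantify universally over the world). For $F=\tE[G_1\tU G_2]$ and $F=\tE\tF G$: here $\stateS\models F$ witnesses a \emph{finite} path $\stateS=\stateS_0\rede{}\cdots\rede{}\stateS_k$ with $\stateS_k\models G_2$ (resp.\ $\stateS_k\models G$) and (for $\tU$) $\stateS_j\models G_1$ for $j<k$; I would induct on $k$, using the definitional unfolding $A\tU B = \downarrow u.\exists v.(B\at u.v\ \with\ \forall w\prec v.\,A\at u.w)$, choosing $v=k$, discharging the left conjunct by the outer induction hypothesis on $G_2$ after $k$ transitions, and the right conjunct ($\forall w\prec k$) by the inner induction on the path length; the $\tE\tF$ case is the same with the $G_1$-conjunct dropped (i.e.\ $\Diamond$).

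\textbf{Backward direction ($\vdash \Rightarrow \models$) and the main obstacle.} Here I would argue by induction on the focused proof, using the context-is-atoms lemma to enumerate the possible shapes of a proof of $\os\Rscr\cs@0;\os\stateS\cs@w\vdash\encCTL{F}@w$: after the negative phase has decomposed $\encCTL{F}@w$ as far as possible, one must decide either on the (now fully positive) residual of the right-hand side or on some $\os r_i\cs@0$; in the first case the proof closes immediately and this forces $\stateS\models F$ at the propositional level, in the second case one transition has been consumed and we recurse with $\stateS'$, $w.1$ in place of $\stateS$, $w$. The genuinely delicate point — and the one I expect to be the main obstacle — is \emph{termination/well-foundedness of this analysis} for the $\tU$ and $\tE\tF$ cases: a priori nothing stops the prover from deciding on rule clauses forever, so I must show that any HyLL proof of these encoded formulas corresponds to a \emph{finite} path. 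This is exactly where the restricted grammar matters: $\encCTL{F}$ built only from $\wedge,\vee,\tU,\tE\tX,\tE\tF$ contains no $\Box$/$\forall w$ on the right that would demand proving a property at \emph{all} future worlds, so the right-hand side, once the negative phase is done, is a \emph{finite} positive formula whose proof has bounded height; focusing on it must eventually succeed, which pins down the length of the path. I would make this precise by a measure argument on the (finite) syntax tree of $\encCTL{F}$ together with the observation (the remark in Section~\ref{sec:tsHyLL}) that unmatched left/right worlds block the initial rule, so a non-terminating sequence of rule applications can never close a branch. Assembling these, the backward induction yields $\stateS\models F$, completing the equivalence.
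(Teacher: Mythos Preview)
Your forward direction is essentially the paper's argument: structural induction on $F$, Proposition~\ref{prop:state-trans} to simulate single transitions, and an inner induction on path length for $\tE\tF$ and $\tE[\cdot\tU\cdot]$. No issues there.

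The backward direction is where you diverge, and where there is a genuine gap. The paper does \emph{not} attempt to analyse the focused proofs of the original encoding directly; instead it replaces $\encCTL{\cdot}$ by a logically equivalent variant that inserts polarity delays $\pdelay{F}=F\otimes\one$ and $\ndelay{F}=\one\limp F$ at every compound connective (e.g.\ $\encCTL{F\wedge G}=\pdelay{\encCTL{F}\with\encCTL{G}}$, $\encCTL{F\vee G}=\ndelay{\encCTL{F}}\oplus\ndelay{\encCTL{G}}$, $\encCTL{\tE\tX F}=\pdelay{\delta^1\encCTL{F}}$). These delays force focus to break after exactly one CTL-level step, so each focused bipole corresponds to one semantic step and the induction on the focused derivation goes through cleanly.

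Without this device your case analysis breaks. Two concrete failures: first, your claim that ``in the first case the proof closes immediately and this forces $\stateS\models F$ at the propositional level'' is false---focusing on the right-hand residual of, say, $\encCTL{\tE\tF G}$ chooses an $\exists v$, hits $\at$, loses focus, and leaves you with $\encCTL{G}@(w.v)$ still to prove; nothing has closed. Second, your recursion ``with $\stateS'$, $w.1$ in place of $\stateS$, $w$'' after deciding on a rule clause does not match the statement you are inducting on: the right-hand formula is still anchored at $w$, not $w.1$, so you are not back at an instance of the induction hypothesis. The paper's delays synchronise the decomposition of the right formula with the transition steps so that this mismatch never arises. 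Your termination worry (``nothing stops the prover from deciding on rule clauses forever'') is a red herring---HyLL proofs are finite objects---but the real issue, extracting a CTL witness from an arbitrary finite focused proof of the \emph{undelayed} encoding, requires either the delay trick or a substantially more intricate argument that you have not supplied.
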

\begin{proof}
($\Rightarrow$) By induction on the structure of $F$.
For the base case, if the state $\stateS$ satisfies the state formula $p$, it is easy to show that  
the sequent $\os \stateS \cs @ u \vdash \os P \cs @ u$ is provable in HyLL. 
If $\ctlS{\stateS}{\Rscr} \tE  \tF ~ F$, then there is a path $\stateS = \stateS_1 \rede{r-{i_1}} \stateS_2 \cdots \stateS_{m-1} \rede{r-{i_{m-1}}} \stateS_m$ 
where $F$ holds at $\stateS_m$. 
Hence,  at each step, we choose  a rule from $\Rscr$ until a $F$-state is reached. 
In HyLL, looking a derivation bottom-up, we focus on one of the rules and transform the 
state $\os  \stateS_i\cs @w$ into the state $\os  \stateS_{i+1}\cs @{w.1}$.
The result follows by repeated applications of Proposition \ref{prop:state-trans}. The cases for $\tE[F \tU G]$ and $\tE\tX F$ follow similarly. Finally, the cases for $\wedge$ and $\vee$ follow immediately by inductive hypothesis. 

($\Leftarrow$) Using the focusing discipline, we shall show that each focused step corresponds exactly to a ``step'' in the deduction of $\ctlS{\stateS}{\Rscr}  F$. In order to ease the proof, we shall consider an encoding slightly different. Let $\pdelay{F}=F\otimes \one$ and $\ndelay{F}=\one\limp F$ respectively. 
Observe that
$\pdelay{F}\equiv\ndelay{F}\equiv F$. Hence, we only introduce positive or negative delays that allow us to focus / disallow focusing in a derivation. The proposed encoding is:
\[
\begin{array}{lll}
\encCTL{\stateS} & = & \os \stateS\cs
\\
\encCTL{ F \wedge G} & = & \pdelay{\encCTL{F} \with \encCTL{G}}
\\
\encCTL{ F \vee G} & = & \ndelay{\encCTL{F}} \oplus \ndelay{\encCTL{G}}
\\
\encCTL{\tE[F \tU G]} & =& \encCTL{F} \tU~ \encCTL{G}
\\
\encCTL{\tE\tX F} & =& \pdelay{\delay{1}\encCTL{F}}
\\
\encCTL{\tE\tF F}& =& \Diamond\encCTL{F} \\
\end{array}
\]


 Consider the sequent $\os \Rscr\cs@0; \os\stateS\cs @ w\vdash \encCTL{ F} @ w$.
 We have two choices: 1) we focus on one of the rules in $\Rscr$ and we transform, in one focused step, the state $\texttt{s}$ into the state $\stateS'$; or 2), we focus on the formula on the right. In the first case, we already showed that this action mimics exactly the transition $\stateS \rede{r_i} \stateS'$.  In the second case, we note that the formula on the right must have the following shape:
\[
\begin{array}{lll}
F &::=& S \mid \one\otimes (F \with F) \mid F \oplus F \mid \downarrow u \ (F ~\at~ u.1) \mid \downarrow u\  (\exists w.F ~\at~u.w) \mid\\
& &
\downarrow u\  \exists v. (F ~\at~u.v \with \forall w\prec v. F ~\at~u.w)
\end{array}
\]
where $S$ is of the shape $\mathtt{s}_1@w\otimes\ldots  \otimes\mathtt{s}_n@w$ (the encoding of a state formula). The other  cases represent, respectively, the encoding of CTL formulas of the shape $F\wedge F$, $F \vee F$, $\tE\tX F$, $\tE\tF F$ and $\tE[F \tU F]$. 
In a negative phase, the only connectives we can introduce, if any,  are the hybrid ones ($\downarrow$ and $\at$). This is a bureaucratic step  allowing us to fix the formulas at the ``current'' world as in 
\[
\infer=[\at_R, \downarrow_R]{\Gamma;  \Delta \vdash \downarrow x (F~\at~y) @ w}{\Gamma;  \Delta \vdash F[x/w] @ y}
\]
Hence, when focusing on the right, we fall in the following cases. If we focus on:
\begin{itemize}
 \item $S$, the context must already have the atoms, at the right world, to prove the conjunction of atoms in $S$. This corresponds to proving that the state $\texttt{s}$ satisfies the state property $S$. 
 \item $\one \otimes (F \with G)$, we prove $\one$ and we lose focusing in $F \with G$. Hence, after a negative phase, we have a derivation proving $F$ and another proving $G$. This corresponds exactly to the step of proving $F$  and $G$ in CTL. 
 
 \item $F \oplus G$, we chose one of the branches and then, we lose focusing again (due to the negative delay in the encoding). This corresponds to proving either $F$ or $G$ in CTL. 
\item  $\pdelay{\delay{1}F} $, we lose focusing again and we obtain, on the right, $F$  fixed at the world $w+1$. This mimics the step of proving $F$ in the next time-unit to show that $\tE\tX F$ holds in CTL. 

\item $\exists w.F ~\at~u.w$, we choose a world $w$ and we lose focusing (due to $\at$). This corresponds in CTL to proving $\tE \tF F$
by showing that  there exists   a future world ($u+w$) where $F$ holds. 

\item The case of $\tE[F \tU G]$ is similar to the previous one. 
\end{itemize}

\end{proof}

\subsection{CTL in $\mu$MALL}

{\bf Theorem \ref{th:ctl_mumall}}. 
Let $\texttt{V}=\{a_1,...,a_n\}$ be a set of propositional variables, $\Rscr$ be a set of transition rules on $\texttt{V}$ and $F$ be a CTL formula. Then, $\ctlS{\stateS}{\Rscr}  F$  iff
the sequent 
$\vdash\os \stateS\cs, \encCTLR{F} $ is provable in $\mu$MALL. 
\begin{proof}


 ($\Rightarrow$) We proceed by induction on the structure of the formula. 
 The base case of a state formula $P$ is immediate (from the encoding $\os \stateS\cs$). 
 The cases for $\wedge$ and $\vee$ are  easy consequences from the inductive hypothesis. 
 
\noindent {\bf Cases $\tA \tX$ and $\tE\tX$. }
Let us note that,  given two different states $\stateS$ and $\stateS'$: 
\begin{itemize}
 \item the sequents $\vdash \os \stateS\cs , \positiveS(\stateS)$ and  $\vdash \os \stateS\cs, \negativeS(\stateS')$ are both provable. 
 \item the sequents $\vdash \os \stateS\cs , \negativeS(\stateS)$ and  $\vdash \os \stateS\cs, \positiveS(\stateS')$ are not provable. 
\end{itemize}

This means that, in a  context containing a formula $\os \stateS\cs$, we can always prove that a given rule $r_i \in \Rscr$ is firable or not.

 Consider the case $\tA \tX F$. In a negative phase, we obtain the following derivation:\\
 
 \resizebox{\textwidth}{!}{
$ 
\infer[\with \qquad]{\vdash \os \stateS\cs , \bigwith\limits_{\stateS\to \stateS'\in \Rscr } \left(  {\negativeS(\stateS)} \oplus ( {\positiveS(\stateS) } \otimes \left( { \os \stateS' \cs } \invamp~ \phi\right)  \right) }{
 \vdash  \os \stateS\cs , \negativeS(\stateS_1) \oplus (\positiveS(\stateS_1) \otimes ( \os \stateS_1'\cs  \invamp~ \phi)
  \quad ...\quad
  \vdash \os \stateS\cs , \negativeS(\stateS_m) \oplus (\positiveS(\stateS_m) \otimes ( \os \stateS_m'\cs  \invamp~ \phi)
}
$
}

\noindent
where $\phi = \encCTLR{F}$. In each case, for every premise we have to start a positive phase and we have to choose between $\negativeS(\stateS_i)$ and $\positiveS(\stateS_i)$. In the first case, if the rule is not fireable, the proof ends. In the second case, we obtain a derivation of the shape:
\[
 \infer=[\otimes,\invamp]{\vdash \os \stateS \cs, \positiveS(\stateS_i)   \otimes (\os \stateS_i'\cs\invamp~ \phi) }{
  \deduce{\vdash \os \stateS_i'\cs, \phi}{}
 }
\]
 and the positive phase ends. By inductive hypothesis, the sequent $\vdash \os \stateS_i'\cs, \phi$ is provable. 
  The case $\tE \tX F$ is similar. 
 
\noindent{\bf Cases for the least fixed point operators.} 
If 
$\tA\tF F$ holds in CTL at state $\stateS$, then, in all paths  starting at $\stateS$, there is a reachable state $\stateS'$ such that $F$ holds in that state. Let $\stateS = \stateS_1 \rede{}   \cdots \rede{} \stateS_n=\stateS'$ be one of such paths and consider the following derivation: \\

\resizebox{\textwidth}{!}{
$
\infer=[\mu,\oplus,\with \qquad]{\vdash \os \stateS\cs , \mu B }{
 \vdash  \os \stateS\cs , \negativeS(\stateS_1) \oplus (\positiveS(\stateS_1) \otimes ( \os \stateS_1'\cs  \invamp~ \mu B)
  \quad ...\quad
  \vdash \os \stateS\cs , \negativeS(\stateS_m) \oplus (\positiveS(\stateS_m) \otimes ( \os \stateS_m'\cs  \invamp~ \mu B)
}
$
}
\ \\ The  premises correspond to proving whether the rule $r_i$ is fireable or not. If $r_i:\stateS_i \to\stateS_i' $ is fireable, we observe a derivation of the shape:
\[
\infer[\oplus]{ \vdash \os \stateS\cs, \negativeS(\stateS_i)  \oplus (\positiveS(\stateS_i)   \otimes (\os \stateS_i'\cs\invamp~ \mu B) )}{
 \infer[\otimes,\invamp]{\vdash \os \stateS \cs, \positiveS(\stateS_i)   \otimes (\os \stateS_i'\cs\invamp~ \mu B) }{
  \deduce{\vdash \os \stateS_i'\cs, \mu B }{}
 }
 }
\]
where $\stateS$ becomes $\stateS_i'$ and, from that state, $\mu B$ must be proved. 
Hence,  we can show that $\os \stateS_n\cs$ will be eventually added to the context. By inductive hypothesis, 
the sequent $\vdash \encCTLR{F}, \os \stateS_n\cs$ is provable and then, 
$\vdash \os \stateS_n\cs, \mu B$ is provable (by choosing $\encCTLR{F}$ in the disjunction  $\encCTLR{\tA \tF F} =  \mu Y.  \encCTLR{F}~   \oplus     \Phi$). 

The other cases for least fixed point operators   follow similarly. 

\noindent{\bf Cases for the greatest fixed point operators.}
Consider now the formula $\tA \tG F$. If this formula holds at $\stateS$, 
then  $\stateS$ must satisfy $F$ and all path starting from $\stateS$
must also satisfy $\tA \tG F$. Let 
\[
S = \{ s \mid \ctlS{s}{\Rscr}  F \mbox{ and, for all $s'$, if } s \rede{} s' \mbox{, then } s' \in S\} 
\]
be the greatest set of states containing  $\stateS$. Note that the greatest fixed point in the (CTL) definition of $\tA\tG$ computes exactly that set. 

Let $S$ above be the set $\{s_1,...,s_n\}$ and $I= \os s_1\cs^\perp \oplus \cdots \oplus \os s_n\cs^\perp$ \footnote{We note that from a finite set of propositional variables,  the set of states (containing, $\absS{x}$ or $\presS{x}$)  is finite. Hence,   any infinite path in such LTS must have a loop. }. 
We shall show that, for any $s\in S$,  the sequent $\vdash \os s \cs, \encCTLR{\tA \tG F}$ is provable using $I$ as inductive invariant. 

Once the rule $\nu$ is applied, we have to prove two premises:
\begin{enumerate}
 \item {\bf Premise } $\mathbf{\vdash \os s\cs, I}$. This sequent is easy by  choosing  $\os s\cs^\perp$ from $I$. 
 \item {\bf Premise } $\mathbf{\vdash B\  I, I^\perp}$. The $\bigwith\limits_{s\in S} \os s \cs$ formula in $I^\perp$ forces us to prove several cases. More precisely, for each $s\in S$, we have to prove $\vdash B I, \os s\cs$. Consider the following derivation:
 \[
 \infer[\with]{\vdash  \phi \with R_1 \with \cdots \with R_n, \os s \cs}{
  \deduce{\vdash \phi, \os s\cs}{}
  &
  \deduce{\vdash R_1, \os s\cs}{}
  & \cdots
  & 
  \deduce{\vdash R_n, \os s\cs}{}
 }
 \]
 where $\phi = \encCTLR{F}$ and $R_i =   \negativeS( \stateS_i ) \oplus ( {\positiveS( \stateS_i)} \otimes \left( { \os \stateS_i' \cs } \invamp~ I \right)$. 
 Again we have several cases to prove. 
 
 The first sequent $\vdash \phi, \os s\cs$ follows from inductive hypothesis. 
 
 If the rule $r_i$ is not fireable at state $s$, the sequent $\vdash \os s\cs, R_i$ is provable (by choosing $\negativeS(\stateS_i)$). 
 
 If $r_i$ is  fireable at state $s$, we then have, in a focused step, the following derivation: 
 \[
  \infer=[\oplus,\otimes,\with]{\vdash R_i, \os s\cs}{
   \deduce{\vdash \os s'\cs, I}{}
  }
 \]
 Since $S$ is closed under $\rede{}$, it must be the case that $s' \in S$ and then, the sequent $\vdash \os s'\cs, I$ is provable. 
\end{enumerate}

The case $\tE\tG$ is similar. 

\noindent($\Leftarrow$) Due to focusing, we can show that the derivations  in the $\Rightarrow$ part are the only way to proceed during a proof in $\mu$MALL. Hence, we match exactly a ``step'' in the deduction of $\ctlS{\stateS}{\Rscr}  F$. 

The only 
interesting case is the one of the greatest fixed point operator. Consider the CTL formula $\tA\tG F$  and assume that we have a proof of the sequent $\vdash \os \stateS \cs , \nu B $
with invariant $I_x$. This means that we have a proof of the sequent $\vdash \os \stateS \cs, I_x$. Moreover, due to the shape of $B$, we must also have a  proof of $\vdash \os \stateS' \cs, I_x$ for any reachable state $\stateS'$. Then, we can show that there is a proof of $\vdash I_x, \bigwith\limits_{s\in S} \os s\cs$ where $\stateS \in S$ and all reachable state $\stateS'$ (from $\stateS$), is also in $S$. Let $I$ be the invariant in the proof of the $\Rightarrow $ part. Note that $I^\perp = \bigwith\limits_{s\in S} \os s\cs$ and then, we have a proof of $\vdash I_x, I^\perp$ (i.e., $\vdash I \limp I_x$). This shows that $I$ is greater than  $I_x$ and then, we also have a proof of 
$\vdash \os \stateS \cs , \nu B $ using $I$. The result follows from a  derivation similar to the one used in the proof of the $\Rightarrow$ part. 
\end{proof}


\end{document}